\newtheorem*{lemma*}{Lemma}
\newtheorem{corollary}[theorem]{Corollary}
\theoremstyle{definition}
\newtheorem{definition}[theorem]{Definition}
\newtheorem{example}[theorem]{Example}
\newenvironment{claimproof}[1]{\noindent\emph{Proof.}\space#1}
{\hfill $\blacksquare$\medskip\par}
\newcommand{\vmms}{\mu}
\newcommand{\inst}{\mathcal{I}}
\newcommand{\bundle}{B} 
\newcommand{\alloc}{A} 
\newcommand{\allocs}{\mathbf \alloc} 
\newcommand{\alloci}[1][i]{\alloc_{#1}} 
\newcommand{\items}{\mathcal{M}} 
\newcommand{\vai}[2][i]{v_{#1}\left(#2\right)} 
\newcommand{\last}{\omega}
\newcommand{\swap}[3]{\textit{SWAP}\left(#1,#2,#3\right)}
\newcommand{\lex}[1]{\stackrel{\mathclap{\tiny lex}}{#1}}
\title{Improved Maximin Share Approximations for Chores\\ by Bin Packing}
\author{ 
Jugal Garg\thanks{University of Illinois at Urbana Champaign, USA} 
\\ \tt{\small jugal@illinois.edu}
\and
Xin Huang\thanks{Kyushu University, Fukuoka, Japan}
\\\tt{\small huangxin@inf.kyushu-u.ac.jp} 
\and
Erel Segal-Halevi\thanks{Ariel University, Ariel 40700, Israel}
\\\tt {\small erelsgl@gmail.com}
}
\date{}
\begin{document}

\maketitle
\begin{abstract}
We study fair division of indivisible chores among $n$ agents with additive cost functions using the popular fairness notion of maximin share (MMS). Since MMS allocations do not always exist for more than two agents, the goal has been to improve its approximations and identify interesting special cases where MMS allocations exists.
We show the existence of 
\begin{itemize}
\item 1-out-of-$\lfloor \frac{9}{11}n\rfloor$ MMS allocations, which improves the state-of-the-art factor of 1-out-of-$\lfloor \frac{3}{4}n\rfloor$. 
\item MMS allocations for factored instances, which resolves an open question posed by Ebadian et al. (2021).  
\item $15/13$-MMS allocations for personalized bivalued instances, improving the state-of-the-art factor of $13/11$. 
\end{itemize}
We achieve these results by leveraging the HFFD algorithm of Huang and Lu (2021). Our approach also provides polynomial-time algorithms for computing an MMS allocation for factored instances and a $15/13$-MMS allocation for personalized bivalued instances. 
\end{abstract}

\section{Introduction}
Fair division of indivisible tasks (or chores) has garnered significant attention recently due to its applications in various multi-agent settings; see recent surveys~\citep{amanatidis2023fair,liu2024mixed}. The problem is to find a \emph{fair} partition of a set $\items$ of $m$ indivisible chores among $n$ agents with preferences. We assume that each agent $i$ has additive preferences represented by the cost functions $v_i(.)$ such that the cost of a set of chores $S$ is given by $v_i(S) = \sum_{c\in S} v_i(c)$, where $v_i(c)$ represents the cost of chore $c$ for agent $i$.

A natural and popular fairness notion in the context of indivisible items is called maximin share (MMS), introduced by~\citet{budish2011approxCEEI}. It appears to be also favored by participating agents in real-life experiments~\citep{GatesGD20}. An agent's MMS is defined as the minimum cost they can ensure by partitioning all the chores into $n$ bundles (one for each agent) and then receiving a bundle with the highest cost. Formally, for a set $S$ of chores and an integer $d$, let $\Pi_d(S)$ denote the set of all partitions of $S$ into $d$ bundles. Then, 

\[ \text{MMS}_i^d(S) := \min_{(S_1, \dots, S_d) \in \Pi_d(S)} \max_j v_i(S_j).\]

Let us denote the MMS  of an agent $i$ by $\vmms_i$ := MMS$_i^n(\items)$. In an MMS allocation, each agent's bundle cost does not exceed their MMS. However, for more than two agents, MMS allocations are not guaranteed to exist~\citep{Aziz2017,FeigeST21}. Therefore, the focus shifted to exploring approximations of MMS and identifying interesting special classes where MMS allocations can be achieved. Two natural relaxations are multiplicative and ordinal approximations. 

\paragraph{$\alpha$-MMS} This approach involves multiplying the MMS by a factor $\alpha > 1$ to raise each agent's threshold. An allocation is said to be $\alpha$-MMS if the cost of each agent's bundle is at most $\alpha$ times their MMS. Research has progressed to demonstrate the existence of $13/11$-MMS allocations~\citep{huang2023reduction}.

\paragraph{$1$-out-of-$d$-MMS} Another way to adjust the threshold is by considering the MMS when the chores are divided into $d<n$ bundles. An allocation is $1$-out-of-$d$-MMS if each agent's bundle cost is no more than MMS$_i^d(\items)$. This relaxation, initially introduced in~\citep{budish2011approxCEEI} for the case of goods, is valued for its resilience to small perturbations in chores costs; see~\citep{Hosseini2021OrdinalMS} for more details. The current best-known factor for which existence is established is 1-out-of-$\lfloor \frac{3}{4}n\rfloor$~\citep{HosseiniSS22}. 

\subsection{Our contributions}
In this paper, we advance the state-of-the-art on all three fronts: achieving exact MMS, and exploring both multiplicative and ordinal approximations. We establish the existence of 
\begin{itemize}
\item {\bf 1-out-of-$\lfloor \frac{9}{11}n\rfloor$ MMS allocations} for all additive instances, improving the current best-known factor of 1-out-of-$\lfloor \frac{3}{4}n\rfloor$. 
\item {\bf MMS allocations for factored instances}, where each $v_i(c) \in \{p_1, p_2, \dots, p_k\}$ such that $p_{\ell} = p_{\ell-1}\cdot q$ for some integer $q>0$ for each $\ell\in[k-1]$. Factored instances encompass the well-studied class of weakly lexicographic preferences~\citep{AzizBLLM19,HosseiniSVX21}. This contribution also resolves an open question posed by~\cite{ebadian2021fairly}.  
\item {\bf $15/13$-MMS allocations for personalized bivalued instances}, where each $v_i(c) \in \{a_i, b_i\}$ for some positive rational numbers $a_i, b_i$.
They generalize the well-studied bivalued instances, where each $v_i(c) \in \{a, b\}$ for some positive constants $a,b$, as the values of $a_i,b_i$ can vary between different agents. This is better than the factor of $13/11$ for general instances. 
\end{itemize}

We achieve these results by leveraging the Heterogeneous First Fit Decreasing (HFFD) algorithm of~\citet{HuangL21}. The HFFD algorithm is a heterogeneous variant of the classic First Fit Decreasing (FFD) algorithm used in the bin packing problem~\citep{Johnson1973}, with an approximation factor proven to be $13/11$~\citep{huang2023reduction}. As in previous works on MMS, the algorithm is straightforward; however, the novelty and challenge lie in the intricate analysis. 

For our first result, we extend the analysis of HFFD algorithm in~\citep{huang2023reduction} to the ordinal approximation of MMS, providing an improved bound. Our second result demonstrates that the HFFD algorithm is optimal for factored instances. Our third result establishes that the HFFD algorithm attains a factor of $15/13$ for personalized bivalued instances through a detailed case analysis. 

Additionally, our approach results in polynomial-time algorithms for the second and third contributions, enabling the computation of an MMS allocation for factored instances and a $15/13$-MMS allocation for personalized bivalued instances in polynomial time. 

\subsection{Further related work}
Given the intense study of MMS notion and its variants, we focus on closely related work here. 

Computing the MMS value of an agent is NP-hard, even for two agents, using a straightforward reduction from the Partition problem. However, a Polynomial Time Approximation Scheme (PTAS) exists for this computation using a PTAS for the job scheduling problem~\citep{HochbaumS87}. However, for factored instances, MMS values can be computed in polynomial time~\citep{ebadian2021fairly}.

MMS allocations are not guaranteed to exist for more than two agents~\citep{Aziz2017,FeigeST21}, which has motivated the exploration of approximate MMS allocations to ensure their existence. For multiplicative approximation, a series of works ~\citep{Aziz2017,barman2020approximation,HuangL21,huang2023reduction} have established the current best approximation factor of $13/11$. On the other hand, for ordinal approximation, research has progressed to show the existence of 1-out-of-$\lfloor \frac{3}{4}n\rfloor$ MMS allocations~\citep{Aigner-Horev2022,HosseiniSS22}. For the special case of (non-personalized) bivalued instances, MMS allocations are known to exist~\citep{Feigebivalued}.  

\paragraph{Goods} The MMS notion can similarly be defined for the fair division of goods. Like the case of chores, MMS allocations for goods do not always exist~\citep{KurokawaPW18}. Extensive research has been dedicated to approximate MMS allocations for goods. Notable works~\citep{amanatidis2017approximation,ghodsi2018fair,garg2021improved,simple,HosseiniS21,Hosseini2021OrdinalMS} have led to a multiplicative approximation factor of $3/4 + 3/3836$~\citep{AkramiG24} and an ordinal approximation factor of 1-out-of-$4\lfloor n/3 \rfloor$~\citep{AkramiGST24}.

\section{Preliminaries}
We denote $[t] = \{1, \dots, t\}$. 
An instance of the problem of dividing a set $\items$ of $m$ indivisible items (chores) among $n$ agents is given by a cost function vector $v = (v_1, \dots, v_n)$, where $v_i$ denotes the additive cost function of agent $i$. We assume that all chore costs are positive, i.e., $v_i(c) > 0$ for all $c\in\items$.

\paragraph{IDO instances} It is without loss of generality to assume that the instance has identical order preferences (IDO), meaning there exists a universal ordering of all chores $c_1, \dots, c_m$ such that, for any agent $i$, $v_i(c_1) \ge v_i(c_2) \dots \ge v_i(c_m)$~\citep{HuangL21}. Thus, we will assume a \emph{universal ordering} over the chores, which is also used for tie-breaking.

\paragraph{FFD algorithm} The chores division problem has a strong connection with the well-studied bin packing problem. \citet{HuangL21} were the first to apply the First Fit Decreasing (FFD) algorithm, a technique from the bin packing problem, to the fair division of chores, a method further utilized in~\citet{huang2023reduction}. 

Given a set $\items$ of chores, a cost function $v$ on $\items$, and a threshold $\tau$ (the "bin size"), the FFD algorithm first orders the chores by descending cost, based on the cost function $v$. It then packs each chore in the order into the first available bin where it fits, meaning the total cost in the bin does not exceed the threshold $\tau$. If no existing bin can accommodate the chore without surpassing the threshold, a new empty bin is opened, and the chore is placed in it. We denote the sequence of bins output by the FFD algorithm for a given set of inputs as $FFD(\items, v, \tau)$.  

\paragraph{MultiFit algorithm} To ensure that the number of bins exactly equals $n$ (one for each agent), we employ the MultiFit algorithm~\citep{coffman1978application}. This algorithm runs the FFD algorithm multiple times with different threshold values $\tau$. The goal is to find a $\tau$ that allows $FFD(\items, v, \tau)$ to allocate all the chores into exactly $n$ bins. It uses binary search to efficiently determine the threshold.  

\paragraph{HFFD algorithm} \citet{HuangL21} extended the FFD algorithm from bins to agents to accommodate the scenario where agents have distinct cost functions $v_i$'s and different thresholds $\tau_i$'s. This generalized algorithm is known as Heterogeneous FFD (HFFD). The HFFD algorithm processes chores by the universal ordering (from largest to smallest cost). It proceeds by filling one bin at a time as follows: for each chore, the algorithm checks if the chore fits into the current bin according to the threshold $\tau_i$ and the cost function $v_i$ of at least one remaining agent $i$. If it does fit, the chore is added to the bin; otherwise, it is skipped. Once no more chores fit into the current bin, the bin is closed and allocated to one of the remaining agents for whom the last chore fitted. This process continues until all chores are allocated or until no remaining agents can accommodate the chores. If all chores are successfully allocated, the algorithm is said to have \emph{succeeded}; if there are remaining chores that cannot be allocated to any agent, the algorithm \emph{failed}. 
\medskip

\noindent We next describe the concept of \emph{First-Fit-Valid}, as defined in~\citep{huang2023reduction}, which will be useful throughout the paper. We note that all the missing proofs are in the appendix. 

\subsection{First Fit Valid}
We first introduce notations to compare two bundles. For a given a bundle $B$, let $B[p]$ denote the $p$-th largest chore in $B$. If multiple chores share the same cost, we select $B[p]$ according to the universal ordering. We extend this notation to define $v(B[p])=0$ when $p > |B|$.

\begin{definition}[Lexicographic order]
Given a cost function $v$ and two bundles $B_1,B_2$, we say $B_1\lex{=} B_2$ if $v(B_1[p])=v(B_2[p])$ for all $p$; $B_1\lex{\le} B_2$ if $B_1\lex{=}B_2$  or  $v(B_1[q])<v(B_2[q])$, where $q$ is the smallest index for which $v(B_1[q])\neq v(B_2[q])$.

Note that every two bundles are comparable by this order.

We define $B_1\lex{\ge}B_2$ if $B_2\lex{\le}B_1$; and $B_1\lex{<}B_2$ if $B_1\lex{\le}B_2$ and $B_1\lex{\neq}B_2$; and similarly define $\lex{>}$. 
\end{definition}

Note that, when we consider only a single cost function, we can treat $\lex{=}$ as $=$. This is because, 
from the point of view of a single agent, chores with the same cost are equivalent, and can be exchanged anywhere, without affecting any of the arguments in the proofs.
 
Based on lexicographic order, we can define a \emph{lexicographically maximal} subset. 
\begin{definition}[Lexicographically maximal subset]
Given a set of chores $S$, a cost function $v$ and a threshold $\tau$, let $S_{\leq\tau}=\{B\subseteq S\mid v(B)\le \tau\}$ be the set of all bundles with cost at most $\tau$. A bundle $B$ is a \emph{lexicographically maximal subset} of $S$, w.r.t. $v$ and $\tau$,
if $B\in S_{\leq\tau}$ and $B\lex{\ge} B'$ for any $B'\in S_{\leq\tau}$.
\end{definition}

Based on lexicographical maximality, the following notion relates the outputs of the FFD and HFFD algorithms. 

\begin{definition}[Benchmark bundle]
Suppose we are given a set of chores $\items$, a collection of bundles $\allocs$, a cost function $v$,  and a threshold $\tau$. For each $k\in[n]$, we define the \emph{$k$-th benchmark bundle} of $\allocs$, denoted $B_k$, as a lexicographically maximal subset of $\items\setminus \bigcup_{j<k} \alloci[j]$ with respect to $v$ and $\tau$.
\end{definition}

Intuitively, once the bundles $\alloci[1],\ldots,\alloci[k-1]$ have been allocated, the benchmark bundle is the lexicographically maximal feasible subset of the remaining items. The following proposition shows that this is exactly the outcome of running the FFD algorithm for generating the next bundle $\alloci[k]$.

\begin{propositionrep}[\citealt{huang2023reduction}]
\label{prop-lexiffd}
Let $\mathbf{D}$ be the bundle collection which is output by FFD$\left(\items,v,\tau\right)$.
Then, for all $i\in[n]$, we have $D_i\lex{=}B_i$, 
where $B_i$ is the $i$-th benchmark bundle of $\mathbf{D}$.
\end{propositionrep}
\begin{proof}
By the description of FFD, we have $v(D_i)\le\tau$,
so $D_i\in S_{\leq \tau}$
where $S = \items\setminus \bigcup_{j<k} D_j$.
By the definition of lexicographically maximal bundle, we have $B_i\lex{\ge}D_i$. To show equality, it is sufficient to prove $D_i\lex{\ge}B_i$ for all $i\in[n]$.
Suppose the contrary, and 
let $i$ be the smallest integer for which $D_i\lex{<}B_i$.
Let $q$ be the smallest integer for which $v(D_i[q])<v(B_i[q])$. 
When FFD constructs bundle $D_i$,
the chore $B_i[q]$ is available, it is processed before $D_i[q]$,
and the cost of the current bundle would not exceed $\tau$ if $B_i[q]$ is put into it at that moment, since $v(B_i)\leq \tau$. 
Therefore, FFD would put $B_i[q]$ into the current bundle before $D_i[q]$ --- a contradiction.
So we have $D_i\lex{\ge}B_i$, which gives us $D_i\lex{=}B_i$ as claimed.
\end{proof}
For completeness, we provide a proof in the Appendix.

Next, we show that when using HFFD to generate the bundle in some iteration $k$, the resulting bundle is guaranteed to be lexicographically greater than or equal to the $k$-th benchmark bundle. This property is called \emph{First Fit Valid}, which is formally defined below.

\begin{definition}[First Fit Valid (FFV)]
\label{def-first-fit-valid}
Given a bundle collection $\allocs$, a cost function $v$ and a threshold $\tau$, we call the tuple $(\items, \allocs,v, \tau)$ \emph{First Fit Valid} if for any $k\in[n]$, we have $A_k\lex{\ge}\bundle_k$, where $\bundle_k$ is the $k$-th benchmark bundle of $\allocs$,\footnote{Note that $B_k$ is lexicographically-maximal among all subsets of $\items\setminus \bigcup_{j<k}A_j$ with cost at most $\tau$, but $A_k$ may be lexicographically larger than $B_k$ since its cost may be larger than $\tau$.} and the lexicographic comparison uses the cost function $v$.
\end{definition}
We emphasize that the definition of a tuple as FFV depends on the entire set of chores $\items$, not only on the chores allocated in $\allocs$.
This is because, if $\items$ contains chores not allocated in $\allocs$, this may affect the benchmark bundles.

With the notion of First Fit Valid, the next proposition characterizes the output of the HFFD algorithm, where $v_{\last},\tau_{\last}$ are the cost function and threshold of the last agent  $\last$ in HFFD.
\begin{propositionrep}[\citealt{huang2023reduction}]
\label{prop-HFFD-first-fit}
For any IDO instance $\inst$ and any threshold vector $(\tau_1,\ldots,\tau_n)$,
if Algorithm HFFD
produces a bundle collection $\allocs$,
then for any $\tau \leq \tau_{\last}$,
the tuple $(\items,\allocs,v_{\last},\tau)$ is a First Fit Valid  tuple. 
\end{propositionrep}
\begin{proof}
For all $k\geq 1$, 
we have to prove that $A_k\lex{\ge}B_k$, where $B_k$ is the $k$-th benchmark bundle of $\allocs$ w.r.t. threshold $\tau$, where the lexicographic comparison uses the cost function $v_{\last}$.

if $A_k\lex{=}B_k$ we are done. 
Otherwise, let $q$ be the smallest index such that $\vai[\last]{A_k[q]}\neq \vai[\last]{B_k[q]}$.
Suppose for contradiction that $\vai[\last]{A_k[q]}<\vai[\last]{B_k[q]}$.
This implies that $B_k[q]$ precedes $A_k[q]$ in the universal ordering. 
In addition, if we add chore $B_k[q]$ to the current bundle before chore $A_k[q]$ is added, the cost of the current bundle would be acceptable by agent $\last$,
since $v_{\last}(B_k)\leq \tau$ by definition of a benchmark bundle,
and $\tau\leq \tau_{\last}$ by assumption.
Therefore, HFFD would insert chore $B_k[q]$ to the current bundle before chore $A_k[q]$, which is a contradiction.

This implies that $\vai[\last]{A_k[q]}> \vai[\last]{B_k[q]}$.
Since $q$ is the first index in which $A_k$ and $B_k$ differ, 
$A_k\lex{>}B_k$.
\end{proof}

\section{ 1-out-of-$\lfloor\frac{9}{11}n\rfloor$ MMS}
In this section, we demonstrate that the HFFD algorithm yields a 1-out-of-$\lfloor\frac{9}{11}n\rfloor$ MMS allocation. Our approach combines the results of FFD for bin packing~\citep{DBLP:journals/tcs/DosaLHT13} with a reduction from HFFD to FFD, closely following the method described in~\citep{huang2023reduction}.

\begin{lemma}\label{lem:FFV-allocates-all-chores}
Let $n\geq 2$ be an integer, 
$\items$ a set of chores, 
and $v$ any cost function, 
Let $d := \lfloor\frac{9}{11}n\rfloor$, and suppose
\[ \text{MMS}^d(\items) = 1.\]
If the tuple $(\items, \allocs, v,1)$ is First Fit Valid, then the allocation $\allocs$ must contain all chores in $\items$.
\end{lemma}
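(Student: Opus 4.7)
\emph{Proof plan.} The plan is to combine the classical bin-packing bound of \citet{DBLP:journals/tcs/DosaLHT13} for FFD with the First-Fit-Valid structure captured by Proposition~\ref{prop-lexiffd}. The hypothesis $\text{MMS}^d(\items)=1$ is exactly the statement that $\items$ admits a partition into $d$ bundles each of cost at most $1$, so the bin-packing optimum at capacity $1$ satisfies $\text{OPT}(\items,1)\le d$. The Dósa–Li–Han–Tuza bound then shows that $FFD(\items,v,1)$ uses at most $\tfrac{11d+6}{9}$ bins; since $d=\lfloor 9n/11\rfloor\le 9n/11$, this right-hand side is strictly less than $n+1$, so $FFD(\items,v,1)$ uses at most $n$ bins and hence packs all of $\items$ into at most $n$ bins of capacity $1$.

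The main task is to lift this FFD-specific statement to an arbitrary FFV sequence of length $n$. I would prove by induction on $k$ the invariant
\[ \text{OPT}\bigl(\items\setminus\textstyle\bigcup_{j\le k}A_j,\ 1\bigr)\ \le\ \max\bigl(0,\ \text{OPT}(\items,1)-k\bigr). \]
Setting $k=n$ forces the residual to be empty, using $\text{OPT}(\items,1)\le d\le n$. The base case is immediate, and the inductive step reduces to the following key claim: if $A\subseteq R$ satisfies $A\lex{\ge}B$ for $B$ the lex-maximal cost-$\le 1$ subset of $R$, then $\text{OPT}(R\setminus A,1)\le \text{OPT}(R,1)-1$.

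The key claim is the main obstacle, and I would attack it by an exchange argument. First, via a sub-lemma, I would establish that some optimal packing of $R$ uses $B$ itself as one of its bundles; this leverages the lex-maximality of $B$ to substitute it into a designated bundle of any optimal packing without increasing the size. From the resulting packing of $R\setminus B$ into $\text{OPT}(R)-1$ bins, I would then transform it into a packing of $R\setminus A$ of the same size by removing the items of $A\setminus B$ from those bins and re-inserting the items of $B\setminus A$. The position-wise domination encoded in $A\lex{\ge}B$ pairs each inserted item with a removed item of at least its size, so no bin overflows. The delicate case is $|A\setminus B|<|B\setminus A|$, where a single large removed item must accommodate several smaller inserted items; here the aggregate budget $v(A)\ge v(B)$ (another consequence of $A\lex{\ge}B$) combined with a careful item-by-item routing closes the gap, completing the induction and hence the lemma.
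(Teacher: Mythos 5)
Your opening step --- observing that $\text{MMS}^d(\items)=1$ gives $\mathrm{OPT}(\items,1)\le d$, and invoking the D\'osa et al.\ bound to conclude that $FFD(\items,v,1)$ uses at most $\tfrac{11d+6}{9}<n+1$ bins --- is exactly the paper's starting point. The fatal gap is in the lifting step: your key claim is false. Taking $A=B$ (which satisfies $A\lex{\ge}B$ trivially and, by Proposition~\ref{prop-lexiffd}, is precisely what FFD produces), the claim asserts that removing the lexicographically maximal feasible subset always decreases the bin-packing optimum by one; iterating this would prove that FFD is an \emph{optimal} bin-packing algorithm, contradicting the very $11/9$ lower bound whose upper-bound counterpart you cite. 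Concretely, take capacity $1$ and items of sizes $\tfrac12+\epsilon$ ($\times 6$), $\tfrac14+2\epsilon$ ($\times 6$), $\tfrac14+\epsilon$ ($\times 6$), $\tfrac14-2\epsilon$ ($\times 12$). Then $\mathrm{OPT}(R,1)=9$ and every optimal bin is exactly full, while $B=\{\tfrac12+\epsilon,\ \tfrac14+2\epsilon\}$ has cost only $\tfrac34+3\epsilon$, so $v(R\setminus B)>8$ and $\mathrm{OPT}(R\setminus B,1)=9$, not $8$. The same instance refutes your sub-lemma that some optimal packing uses $B$ as one of its bundles (no optimal bin has slack). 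No per-bundle comparison against $\mathrm{OPT}$ can work here: the slack that FFD loses relative to the optimum accumulates across bins and is recouped only globally, via the additive-and-multiplicative $\tfrac{11}{9}\mathrm{OPT}+\tfrac{6}{9}$ bound.

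The paper therefore lifts from FFV to FFD by a different mechanism: it compares $\allocs$ not to an optimal packing but to an actual FFD run on a \emph{modified instance}. It first disposes of the case $v(c^-)\le\tfrac{2}{11}$ directly (the average bundle cost is at most $d/n\le \tfrac{9}{11}$, so some bundle can absorb $c^-$ without exceeding $1$, contradicting FFV). In the remaining case it deletes redundant chores and then, one overfull bundle at a time, replaces the smallest chore of $A_k$ by a ``suitable reduced chore'' whose cost equals the fit-in space, proving (with a dedicated argument for the large fit-in-space case, where $A_k$ is shown to have exactly two chores) that this replacement leaves the allocation equal to the output of $FFD(\items',v,1)$ on the modified instance $\items'$, with the same unallocated chore and with $\text{MMS}^d(\items')\le 1$. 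Only then is the D\'osa et al.\ bound applied, to $\items'$. If you want to repair your induction, the invariant must track the residual of an actual FFD execution rather than of $\mathrm{OPT}$ --- which is essentially what the suitable-reduced-chore construction accomplishes.
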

\begin{proof}
We assume by contradiction that there is a chore $c^- \in \items$, that is not in any bundle in $\allocs$.

Let us consider the case where $v(c^-) \leq \frac{2}{11}$. Given that $\text{MMS}^d(\items) = 1$, the total cost of all chores in $\items$ is at most $n d$. This implies there exists at least one bundle $A_j$ whose cost is less than $d \leq \frac{9}{11}n$. Consequently, we have $v(A_j \cup \{c^-\})\leq 1$. However, $A_j \cup \{c^-\}$ is lexicographically larger than $A_j$, thus contradicting the definition of First Fit Valid.

We can now assume that $v(c^-) > \frac{2}{11}$. Because removing all chores less than $c^-$ would not affect the First Fit Valid. This operation is dicussed in Tidy-up Procedure in~\citep{huang2023reduction}.

We will construct another instance $\items'$ by decreasing the cost of some chores and removing some chores from $\items$.  Then we can obtain another allocation $\allocs'$, which is output of FFD$(\items', v, 1)$. The unallocated chore $c^-$ will still be in $\items'$, and not in the output of FFD. Notice that for this new instance the maximin share for $\lfloor\frac{9}{11}n\rfloor$ bins is no more than 1. Then we have a contradiction here. Because FFD algorithm can allocate all chores in this case \citep{DBLP:journals/tcs/DosaLHT13}. 

We now show how to construct such an instance, using notions from \citep{huang2023reduction}. 
    
We first define \emph{redundant chores}. Given a bundle $A_j$, a chore $A_j[p]$ is called \emph{redundant} if $p>k$ where $k$ is the smallest integer such that $v(\cup_{p\le k}A_j[p])\ge1$.  We remove all redundant chores from $\items$ and $\allocs$ at the beginning. We keep the notation $\items$ and $\allocs$, but assume that there is no redundant chores in the rest of the proof.

To see why they are called redundant, we can go back to the algorithm. If we set threshold 1 for FFD, then FFD will not allocate to any bundle chores with total cost more than 1.  So removing redundant chores does not influence the fact that there is an unallocated chore. 

Let $k$ be the smallest integer such that $v(A_k)>1$. Apply Proposition \ref{prop-lexiffd} to the first $k-1$ bundles of $\allocs$, they are the same as the output of FFD$(\items,v,1)$. To fix the $k$-th bundle, we introduce the following concept from~\citep{huang2023reduction}.
 
\begin{center}
\noindent\fbox{\parbox{0.95\textwidth}{
\begin{definition}[Suitable reduced chore]\label{def-suitable-redu} 
Let $(\items, \allocs, v,1)$ be a First Fit Valid tuple without any redundant chores. Let $k$ be the smallest integer such that  $v(\alloci[k])>1$. Let $c_k^*$ be the smallest chore in $A_k$.
    Let $c_k^{\circ}$ be a chore with $v(c_k^{\circ})<v(c_k^*)$ and $\items'=\items\setminus\{c^*_k\}\cup\{c_k^{\circ}\}$. 
    
Let $\mathcal{P}$ be the output of $FFD(\items',v,1)$. The  chore $c_k^{\circ}$ is a \emph{suitable reduced chore} of $c_k^*$ if
  \begin{equation}
  \label{eq:suitable-redu}
     \bigcup_{j\le k}P_j=\left(\bigcup_{j\le k}A_j\right)\setminus\{c_k^*\}\cup\{c_k^{\circ}\}.
  \end{equation}

In other words: replacing $c_k^{*}$ by $c_k^{\circ}$ has no effect on the allocation of bundles $i>k$.
\end{definition}
}}
\end{center}

Note that we change Tidy-up tuple with First Fit Valid tuple in the above definition. If we can always find a suitable reduced chore, then we can fix the allocation to the output of FFD algorithm one bundle by one bundle. The remaining thing is to prove there is a suitable reduced chore. 

Roughly, when we want to find a suitable reduced chore for bundle $A_k$, its reduced cost should be such that the new cost of $A_k$ is at most $1$. This puts an upper bound on the cost of suitable reduced chores. This motivates the following definition.

\begin{definition}[Fit-in space~\citep{huang2023reduction}]
Given an index $k$,
let $c^*_k$ be the last (smallest) chore in bundle $\alloci[k]$. 
The \emph{fit-in space} for $\alloci[k]$ is $ft(k) := 1-v(\alloci[k]\setminus\{c^*_k\})$.
\end{definition}

Note that since there are no redundant chores, the fit-in space is non-negative. The following claim is useful for simplifying our proof. 
\begin{claim}[\citealt{huang2023reduction} (Lemmas 11 and 12)]\label{claim:small-fit-in}
    Let $\gamma=\frac{2}{11}$. If there are no redundant chores, and no chore is smaller than the unallocated chore $c^-$, and $ft(k) \le 2\gamma=\frac{4}{11}$, then a suitable reduced chore exists. 
\end{claim}

Example \ref{example:reduction} after the proof gives some details that how this reduction work. It can help understand the whole proof.

Now we only need to consider the case of a large fit-in space $ft(k)> 2\gamma=\frac{4}{11}$. 
This is the main point in which our proof differs from theirs. 
In their proof, they used a procedure called ``Tidy-Up'', which makes the instance easier to work with. We cannot use this procedure here, as it might reduce the number of bundles. Therefore, we need a whole different proof for the case of a large fit-in space.

Let $c_k^*$ be the smallest chore in $A_k$.
We prove that, for the case of large fit-in space, there are exactly 2 chores in bundle $\alloci[k]$.

No chore could be larger than 1, as this is the maximin share for $\lfloor\frac{9}{11}n\rfloor$ bins. 
On the other hand, there cannot be three chores in bundle $A_k$. Otherwise, it means that there are at least two chores except $c_k^*$. 
When the cost $v(A_k)>1$, the smallest chore is larger than  the fit-in space by definition. So the cost of each chore in $A_k$ is at least $\frac{4}{11}$. So the total cost of $A_k\setminus\{c_k^*\}$ is at least $2\cdot\frac{4}{11}=\frac{8}{11}$. By the definition of fit-in space, we have $v(A_k\setminus\{c_k^*\})=1-ft(k)<\frac{7}{11}$.  We have a contradiction here. Therefore, bundle $A_k$ contains exactly 2 chores, namely $A_k[1]$ and $A_k[2] = c_k^*$.

Let $c_k^{\circ}$ be a chore with cost $v(c_k^{\circ})=ft(k)$. We  prove it is a suitable reduced chore. 
Let us run FFD for the first $k$ bundles on the instance after replacing chore $c_k^*$ with $c_k^{\circ}$. 

Suppose first that $c_k^{\circ}$ is not allocated to a bundle $P_i$ with $i<k$. By selection of $k$, this means that the first $k-1$ bundles generated by FFD are equal to those of $\allocs$ ($P_i=A_i$ for all $i<K$), then when filling $P_k$, the chores $A_k[1]$ and $c_k^{\circ}$ are available.  Because $\allocs$ is FFV, $A_k[1]$ must be the largest remaining chore, so it will be allocated to $P_k$ first.  Now, the remaining room in $P_k$ is exactly equal to the fit-in space, which is the size of $c_k^{\circ}$.  Therefore, $c_k^{\circ}$ will be allocated to $P_k$, and we are done.

Now we consider the case that $c_k^{\circ}$ is allocated to a bundle $P_i$ where $i<k$.  We have the following claim.

\begin{claim}\label{claim:equal-chores}
Suppose that $c_k^{\circ}$ is allocated to a bundle $P_i$ where $i<k$, we have $v(P_i[1])=v(P_k[1])=v(A_k[1])=v(P_j[1])$ for all $i<j<k$.
\end{claim}

\begin{claimproof}
We first show  $v(A_i[1])=v(P_i[1])$ when $i \le k$. As allocation $\allocs$ is First Fit Valid w.r.t. $\items$, and the costs of first $k-1$

As FFD processes large chores before small chores, any chore larger than $c_k^*$ that is allocated in $A_j$ where $j<k$, will be allocated in $P_j$, as its allocation is not influenced by $c_k^*$.

We also have $v(A_k[1])=v(P_k[1])$ by FFV, as it is the largest chore remaining after removing the chores in the first $k-1$ bundles.

By the properties of FFD, we have $v(P_i[1])\ge v(P_k[1])=v(A_k[1]) = 1-ft(k)$, as $A_k[1]$ is the only chore in $A_k$ except the last one.

If $c_k^{\circ}$ is allocated to $P_i$ for $i<k$, we have $v(P_i[1]\cup c_k^{\circ})\le 1$, which means $v(P_i[1])\le 1-v(c_k^{\circ}) = 1-ft(k)$. 

Combining the two inequalities on $v(P_i[1])$ leads to $v(P_i[1])=1-ft(k)=v(A_k[1])$. As $v(P_k[1])\le v(P_j[1])\le v(P_i[1])$ for $i<j<k$, the costs of all these chores must be equal.
\end{claimproof}

With Claim~\ref{claim:equal-chores}, we have an easy description of what happened between $\mathcal{P}$ and $\mathcal{A}$. We prove that there will be a shift: for all $i< j\le k$, we have $P_j=\{A_j[1]\}\cup (A_{j-1}\setminus \{A_{j-1}[1]\})$. 
First, we have $P_i=\{A_i[1]\}\cup \{c_k^{\circ}\}$. Basically, the argument is as follows: If you give the same space and same set of chores, the FFD algorithm will get the same result.  We prove that it is true for $i+1$. The similar argument directly works for other indexes. Let $S$ be the set of chores $A_i\setminus\{A_i[1]\}$. The set of chores $S$ are allocated into a space $1-v(A_i[1])=ft(k)$. After allocating the chore $P_{i+1}[1]$, the remaining space is $ft(k)$.  When FFD algorithm try to generate $P_{i+1}$, let us consider the set of chores which are no larger than $ft(k)$. Let us denote the set by $L=\{c\mid \items'\setminus(\cup_{j\le i}P_i)\}$. By Proposition \ref{prop-lexiffd}, the first $i$ bundles of allocation $\allocs$ are the same to the output of FFD$(\items,v,1)$. Let us consider the set of chores which are no larger than $ft(k)$ when $A_i$ is generated. Let us denote the set by $R=\{c\mid v(c)\le ft(k)\text{ and } c\in \items\setminus(\cup_{j<i} A_j)\}$. We have $R=L$. So what are allocated to $A_i$ will be also allocated to $P_{i+1}$.
For other bundles, the argument is similar. 

The chores in the bundle with index later than $k$ will not be affected the allocation of first $k$ bundle. So $c_k^{\circ}$ is a suitable reduced chore. 
\end{proof}
\begin{example}\label{example:reduction}
    This example illustrates the reduction from a FFV instance to a valid output for the FFD algorithm. Let us assume that the FFD threshold is set at $1$. There are four bundles with chore costs as follows: $A_1=\{.70,.25,.25\}, A_2=\{.60,.35\}, A_3=\{.60,.50\}, A_4=\{.54,.24,.23\}$. There is also one unallocated chore with a cost of $.19$.  

    We can see that bundles $A_1,A_3,A_4$ are invalid for FFD algorithm, as their total cost is larger than 1. The reduction will fix them one by one. 
    \begin{itemize}
        \item For bundle $A_1$ the fit-in space is $ft(1)=.05$, which is less than $\gamma=\frac{2}{11}$, 
        Therefore, we just remove the last chore (with cost $.25$).
        \item For bundle $A_3$, the fit-in space is $ft(3)=.40$, which is greater than $2\gamma=\frac{4}{11}$. Here, reducing the cost of $.50$ to $.40$ is a suitable reduced chore. The suitable reduced chore will be allocated to $A_2$ and kick out the last chore in bundle $A_2$. The last chore in $A_2$ will be allocated in bundle $A_3$.
        \item For the bundle $A_4$, the fit-in space is $ft(4)=.22$. By the result from \cite{huang2023reduction}, we can always find a suitable reduced chore here. In this particular case, any cost between $.19$ and $.22$ is Ok. Let us choose $.20$ for the suitable reduced chore. 
    \end{itemize}
    After all these operations, we get the following bundles. We have $A_1'=\{.70, .25\}, A_2' = \{.60, .40\}, A_3'=\{.60, .35\}, A_4'=\{.54, .24, .20\} $. The unallocated chore with cost $.19$ is still there. The resulting allocation is the same as the output of the FFD algorithm on the set of chores.
\end{example}
\begin{theorem}
Suppose that the MMS value for $\lfloor\frac{9}{11}n\rfloor$ bundles (bins) is 1 for every agent.  If the HFFD  algorithm is executed with a threshold of at least $1$ for each agent, then all chores are allocated.
\end{theorem}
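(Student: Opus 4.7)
The plan is to package the statement as a direct consequence of the two main tools already built up in this section: Proposition \ref{prop-HFFD-first-fit}, which certifies that HFFD's output is First Fit Valid with respect to the last agent's cost function and any threshold $\tau \le \tau_{\last}$, and Lemma \ref{lem:FFV-allocates-all-chores}, which guarantees that an FFV tuple with threshold $1$ must allocate every chore whenever the MMS value for $\lfloor 9n/11 \rfloor$ bundles is also $1$. The theorem is then just the composition of these two facts.

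First I would invoke the standard IDO reduction cited in the Preliminaries so that HFFD can be applied against a universal ordering, noting that the hypothesis MMS$_i^{\lfloor 9n/11 \rfloor}(\items)=1$ is preserved (or can only decrease) under this reduction, which is still consistent with the input conditions of Lemma \ref{lem:FFV-allocates-all-chores}. Then I would run HFFD with the given threshold vector $(\tau_1,\ldots,\tau_n)$ satisfying $\tau_i\ge 1$ for each $i$, and let $\allocs$ denote the resulting collection. If HFFD halts before opening $n$ bundles, then by the description of the algorithm this can only happen because no chores are left, so the claim holds trivially; otherwise $\allocs=(\alloci[1],\ldots,\alloci[n])$ has exactly $n$ bundles.

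Since $\tau_{\last}\ge 1$, I can apply Proposition \ref{prop-HFFD-first-fit} with $\tau=1$ to conclude that $(\items,\allocs,v_{\last},1)$ is a First Fit Valid tuple. By hypothesis, the MMS value for $\lfloor 9n/11 \rfloor$ bundles is $1$ for every agent, and in particular for the last agent $\last$. Lemma \ref{lem:FFV-allocates-all-chores}, applied with $v:=v_{\last}$, then yields $\bigcup_{i\in[n]}\alloci=\items$, which is exactly the desired conclusion that HFFD allocates all chores.

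The main technical difficulty of this section is entirely absorbed into Lemma \ref{lem:FFV-allocates-all-chores}, where the case split on the fit-in space and the suitable-reduced-chore construction are carried out. Consequently, the theorem itself is pure plumbing; the only subtlety worth pointing out explicitly is why we are free to pass to $v_{\last}$ — namely, because Proposition \ref{prop-HFFD-first-fit} produces the FFV property only for the \emph{last} agent's cost function, and because the MMS hypothesis is assumed uniformly over all agents, so this choice is always admissible.
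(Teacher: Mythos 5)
Your proposal is correct and follows essentially the same route as the paper: invoke Proposition \ref{prop-HFFD-first-fit} with $\tau=1$ to certify that $(\items,\allocs,v_{\last},1)$ is First Fit Valid, then apply Lemma \ref{lem:FFV-allocates-all-chores} with $v=v_{\last}$ using the uniform MMS hypothesis. The extra remarks on the IDO reduction and early termination are harmless additions not needed by the paper's argument.
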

\begin{proof}
Let $\allocs$ be the output of HFFD. 
By Proposition \ref{prop-HFFD-first-fit}, 
since $\tau_{\last}\geq 1$, the tuple 
$(\items, \allocs, v_{\last}, 1)$ is First-Fit-Valid. 

Given that the 1-out-of-$\lfloor \frac{9}{11} n\rfloor$ MMS of $v_{\last}$ is $1$ by assumption, 
we can apply Lemma \ref{lem:FFV-allocates-all-chores}
with $v = v_{\last}$,
and thus conclude that $\allocs$ must contain all chores.
\end{proof}

\section{Reduction: Factored \& Bivalued Instances}
\label{sec:reduction}
We will demonstrate that for factored and personalized-bivalued instances, the approximation ratio of HFFD is equal to that of MultiFit.

To facilitate our proof, we will use a swapping operation extensively. The swapping process is defined as follows: 
\begin{definition}
Given an allocation $\allocs$, a set of items $T_i\subseteq A_i$ and $T_j\subseteq A_j$, let allocation $\allocs'$ be the allocation after swapping $T_i$ and $T_j$. Formally, we have $A'_k=A_k$ if $k\neq \{i,j\}$  and $A'_i=(A_i\setminus T_i)\cup T_j$ and $A'_j=(A_i\setminus T_j)\cup T_i$. We define $$\swap{\allocs}{T_i}{T_j}=\allocs' \enspace .$$
\end{definition}

When the subscripts $i$ and $j$ are not crucial or their meaning is clear from the context, we will use symbols without subscripts. Additionally, there may be cases where one set of chores could be empty.
For example, the operation $\swap{\mathcal{A}}{\emptyset_j}{T}$ denotes moving the set of chores $T$ to bundle $A_j$. The subscript $j$ specifies the bundle receiving the chores in $T$.

\begin{lemma}[Reduction Lemma]
\label{lem:reduction}
For any threshold $\tau>0$, and any  cost function $v$
that is either factored or bivalued if FFD($\items, v,\tau$) allocates all the chores into $n$ bins,
and $(\items, \mathcal{Q},v, \tau)$ is a First-Fit Valid tuple,
then $\mathcal{Q}$ contains all chores in $\items$.
\end{lemma}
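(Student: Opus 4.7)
The plan is to argue by contradiction: suppose some chore $c^-\in\items$ is unallocated by $\mathcal{Q}$, and derive a contradiction by combining FFV with the rigid structure of factored or bivalued cost functions.

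\textbf{Slack inequality (general starting point).} If any bundle $Q_j$ satisfied $v(Q_j)+v(c^-)\le\tau$, then $Q_j\cup\{c^-\}$ would be a subset of $\items\setminus\bigcup_{k<j}Q_k$ with cost at most $\tau$ that is lex-strictly larger than $Q_j$, forcing $B_j\lex{>}Q_j$ and contradicting FFV. Hence every bundle has $v(Q_j)>\tau-v(c^-)$, i.e.\ slack strictly less than $v(c^-)$. Summing over the $n$ bundles and using the identity $\sum_j v(Q_j)=v(\items)-v(c^-)$ together with the hypothesis $v(\items)\le n\tau$ (since FFD succeeds with $n$ bins of size $\tau$) shows the counting is already nearly tight, so only a small amount of additional structural information is needed to close the gap.

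\textbf{Factored case.} With costs in $\{p_1,p_1q,p_1q^2,\dots\}$ I would first reduce to the case where $\tau$ is itself a multiple of $p_1$: rounding $\tau$ down to the nearest multiple of $p_1$ leaves both FFD's output and the FFV benchmark bundles unchanged, because every attainable bundle cost lies in $p_1\mathbb{Z}$. The slack inequality then strengthens to $v(Q_j)\ge\tau-v(c^-)+p_1$, and summing gives $(n-1)v(c^-)\ge np_1$, hence $v(c^-)\ge 2p_1$. This immediately rules out $v(c^-)=p_1$ and supplies the base case for a downward induction on the size class of $c^-$. For larger classes $v(c^-)=p_\ell$ with $\ell\ge 2$, I would use the divisibility $p_\ell=qp_{\ell-1}$ to perform a swap that replaces $c^-$ by $q$ chores of cost $p_{\ell-1}$ taken from a suitably chosen bundle, producing a new FFV allocation whose unallocated chores all belong to strictly smaller size classes, thereby completing the induction.

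\textbf{Bivalued case.} With costs in $\{a,b\}$ and $a<b$, FFV forces each $Q_j$ to list $b$-chores before $a$-chores, and each bundle cost has the form $y_j b+x_j a$ with $y_j,x_j\in\mathbb{Z}_{\ge 0}$. The slack constraint together with FFV applied to the leading positions of each bundle pins down the number $y_j$ of $b$-chores in $Q_j$ (it must equal the corresponding count in the FFD output $\mathcal{D}$). A case split on whether $v(c^-)=a$ or $v(c^-)=b$, followed by counting the total number of $a$- and $b$-chores in $\mathcal{Q}$ versus $\mathcal{D}$, will either locate a bundle that could admit $c^-$ (contradicting FFV) or reveal an arithmetic inconsistency in the chore counts, since $\mathcal{Q}$ and $\mathcal{D}$ partition the same multiset minus $c^-$.

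\textbf{Main obstacle.} The chief difficulty is ensuring that a swap preserves FFV, because moving chores between bundles shifts the benchmark bundles for all later indices, and the inequality $Q_k\lex{\ge}B_k$ must be re-verified. The leverage is that in both structured cases the benchmark bundles have very few possible shapes, so swaps can be designed to preserve the comparison. A cleaner alternative I would attempt first is a direct strong induction proving $Q_k\lex{=}D_k$ for every $k$ (where $\mathcal{D}$ is the FFD output); since $\mathcal{D}$ is complete, this would immediately yield completeness of $\mathcal{Q}$ and obviate the swap argument altogether.
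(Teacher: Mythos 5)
Your overall plan differs from the paper's: you argue by contradiction via a slack/counting argument plus instance-modifying swaps, whereas the paper never modifies the instance --- it takes the complete FFD output $\mathcal{P}$ and transforms it into $\mathcal{Q}$ bundle by bundle via swaps, proving as the key invariant that the cost of every not-yet-processed bundle of $\mathcal{P}$ never increases, so that $v(P_k)\le\tau$ still holds when bundle $k$ is reached and FFV can be invoked to get $Q_k\lex{\ge}P_k$. Your opening slack inequality is sound and your factored base case (the smallest unallocated chore cannot lie in the smallest cost class) checks out, but the proposal has two genuine gaps. First, the factored induction step is not a proof: ``replacing $c^-$ by $q$ chores of cost $p_{\ell-1}$'' changes the instance $\items$, so both hypotheses of the lemma --- that FFD succeeds on the new instance with $n$ bins, and that the modified $\mathcal{Q}$ is still FFV with respect to the new benchmark bundles --- would have to be re-established, and neither is addressed. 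The paper avoids this entirely: it uses the divisibility structure only through the fact (Claim~\ref{Claim:exist-subset}) that a factored multiset of total cost at least $v(c)$ contains a subset of cost exactly $v(c)$, which lets it swap equal-cost sets between $P_k$ and later bundles without ever touching $\items$.

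Second, your bivalued argument rests on the claim that FFV ``pins down'' the number of large chores in each $Q_j$ to equal the count in the FFD output; this is false. FFV gives only the one-sided inequality $Q_k\lex{\ge}B_k$, where $B_k$ is the benchmark relative to $\mathcal{Q}$'s own prefix, and $v(Q_k)$ may exceed $\tau$ (see the footnote to Definition~\ref{def-first-fit-valid}), so $Q_k$ can contain strictly more large chores than $D_k$. For the same reason your proposed ``cleaner alternative'' of proving $Q_k\lex{=}D_k$ directly cannot work: that statement is simply not true in general, and only the weaker fact that $\mathcal{Q}$ is reachable from $\mathcal{D}$ by swaps (hence allocates the same set of chores) holds. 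Finally, you correctly identify the main obstacle --- that swaps perturb all later benchmark bundles and FFV must be re-verified --- but you do not resolve it, and resolving it (via the monotonicity of later bundle costs under the specific swaps chosen) is precisely where the paper's proof does its work.
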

We will prove the above lemma for factored cost functions in Section \ref{sec:factored} and for bivalued cost functions in Section \ref{sec:bivalued}.

This lemma implies two things: 

\begin{corollary}
\label{cor:hffd}
    For both factored and personalized-bivalued instances, 
    we can reduce multiple agents case to single agent case. That is, the approximation ratio of HFFD is equal to that of MultiFit.
\end{corollary}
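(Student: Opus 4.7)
The plan is to deduce the corollary almost directly from the Reduction Lemma (Lemma~\ref{lem:reduction}). One direction is immediate: any single-agent MultiFit instance embeds into HFFD by taking all $v_i$ identical, so the approximation ratio of HFFD is at least that of MultiFit. For the reverse inequality, we must show that HFFD achieves, on every multi-agent instance in the restricted class, the factor that MultiFit achieves on single-agent instances.

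Let $\alpha$ denote the approximation ratio MultiFit achieves for single-agent instances within the restricted class (factored or personalized-bivalued). Given a multi-agent instance, first compute $\vmms_i$ for each agent $i$ and set the HFFD threshold $\tau_i := \alpha\,\vmms_i$. By the definition of MultiFit's guarantee applied to the single-agent instance $(\items, v_i)$, there is a threshold at most $\tau_i$ for which FFD$(\items, v_i, \tau_i)$ succeeds, i.e., packs all chores of $\items$ into exactly $n$ bins. Since FFD is monotone in its threshold, FFD$(\items, v_i, \tau_i)$ also succeeds at the (possibly larger) value $\tau_i$ itself.

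Now run HFFD with threshold vector $(\tau_1,\ldots,\tau_n)$, and suppose for contradiction that it fails, producing a (possibly partial) allocation $\mathcal{Q}$. Letting $\omega$ be the last agent as in Proposition~\ref{prop-HFFD-first-fit}, the tuple $(\items, \mathcal{Q}, v_\omega, \tau_\omega)$ is First-Fit Valid. Applying the Reduction Lemma with $v = v_\omega$ and $\tau = \tau_\omega$ then forces $\mathcal{Q}$ to contain every chore of $\items$, contradicting the assumed failure. Hence HFFD succeeds, and since each bundle $\alloci$ was accepted by agent $i$ under threshold $\tau_i$, we have $v_i(\alloci) \leq \tau_i = \alpha\,\vmms_i$, establishing the $\alpha$-MMS guarantee.

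The corollary itself is a short bookkeeping argument once the Reduction Lemma is in hand; the genuine technical obstacle is establishing that lemma for the two restricted classes, which is the focus of Sections~\ref{sec:factored} and~\ref{sec:bivalued}. A minor subtlety worth flagging in the writeup is the IDO assumption underlying Proposition~\ref{prop-HFFD-first-fit}; this is harmless because factored and personalized-bivalued classes are closed under the standard IDO reduction, so one may assume IDO at the outset without loss of generality.
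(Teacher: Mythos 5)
Your argument is correct and takes essentially the same route as the paper: Proposition~\ref{prop-HFFD-first-fit} makes the HFFD output First-Fit Valid for the last agent's cost function, and Lemma~\ref{lem:reduction} then forces that output to contain all chores --- the paper's own proof is exactly this observation stated in one line. The extra scaffolding you supply (the easy embedding direction, the appeal to FFD monotonicity --- which the paper derives independently from the same Reduction Lemma as Corollary~\ref{cor:monotonicity}, so no circularity arises --- and the remark that both classes are closed under the IDO reduction) is sound and consistent with the paper's framework.
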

\begin{proof}
\label{cor:monotonicity}
This follows from Lemma \ref{lem:reduction} as the outcome of HFFD when all agents have threshold $\tau$ is First Fit Valid for $v_{\last}$ and $\tau$.
\end{proof}

\begin{corollary}
\label{cor:monotonicity}
For both factored and personalized-bivalued instances, FFD is \emph{monotone} in the following sense:
if FFD$(\items,v,\tau)$ allocates all chores, then FFD$(\items,v,\beta)$ allocates all chores for any $\beta>\tau$. 
\end{corollary}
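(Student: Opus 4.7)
The plan is to derive the corollary directly from Lemma~\ref{lem:reduction}. Let $\mathcal{Q}$ be the output of FFD$(\items, v, \beta)$. I will show that $(\items, \mathcal{Q}, v, \tau)$ is First-Fit Valid, and then invoke Lemma~\ref{lem:reduction} with threshold $\tau$ (whose hypothesis is precisely the assumption of this corollary) to conclude that $\mathcal{Q}$ contains all chores, which is exactly what monotonicity asserts.

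The key step is a monotonicity observation about benchmark bundles. By Proposition~\ref{prop-lexiffd} applied to FFD at threshold $\beta$, every bundle $Q_k$ satisfies $Q_k \lex{=} B_k^{\beta}$, where $B_k^{\beta}$ denotes the $k$-th benchmark bundle of $\mathcal{Q}$ with respect to threshold $\beta$. Let $B_k^{\tau}$ denote the corresponding $k$-th benchmark bundle at the smaller threshold $\tau$. Both are lexicographically-maximal subsets of the \emph{same} remaining set $\items \setminus \bigcup_{j<k} Q_j$, and they differ only in the cost cap. Since $\tau \le \beta$, the family of feasible candidates for $B_k^{\tau}$ is contained in that for $B_k^{\beta}$, and so the lex-max over a larger family is at least the lex-max over a smaller family, i.e.\ $B_k^{\beta} \lex{\ge} B_k^{\tau}$. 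Chaining yields $Q_k \lex{\ge} B_k^{\tau}$ for every $k$, which is exactly the FFV condition of Definition~\ref{def-first-fit-valid} at threshold $\tau$.

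A small subtlety concerns the number of bundles that FFD produces at threshold $\beta$. If this number ever exceeded $n$ (the bin count of the smaller-threshold run), I would apply the argument above to the \emph{first $n$} bundles of $\mathcal{Q}$: since the benchmark-bundle comparison for index $k\le n$ only inspects the prefix $\bigcup_{j<k} Q_j$, FFV still holds for this truncation, and Lemma~\ref{lem:reduction} would force all chores into these $n$ bundles, contradicting the act of opening an $(n{+}1)$-st bin. Thus FFD at threshold $\beta$ uses at most $n$ bundles, and a direct application of Lemma~\ref{lem:reduction} finishes the proof.

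I expect the only obstacle to be setting up the benchmark-bundle comparison cleanly, namely making sure that both $B_k^\beta$ and $B_k^\tau$ are taken with respect to the identical prefix $\bigcup_{j<k} Q_j$ so that only the cost threshold varies between them. Once this is observed, everything else reduces to a one-line invocation of the Reduction Lemma.
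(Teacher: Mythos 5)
Your proposal is correct and follows essentially the same route as the paper, which simply asserts that the output of FFD$(\items,v,\beta)$ is First-Fit Valid for $v$ and $\tau$ and then invokes Lemma~\ref{lem:reduction}. You additionally supply the justification the paper leaves implicit (the benchmark-bundle comparison $Q_k \lex{=} B_k^{\beta} \lex{\ge} B_k^{\tau}$, plus the truncation argument when FFD at threshold $\beta$ opens more than $n$ bins), both of which are sound.
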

\begin{proof}
This follows from Lemma \ref{lem:reduction} as the outcome of FFD$(\items,v,\beta)$ is First Fit Valid for $v$ and $\tau$.
\end{proof}

\paragraph{Polynomial-time computation} Corollary \ref{cor:monotonicity} implies that we can use binary search to design polynomial-time algorithms for factored and personalized bivalued instances. The process involves performing a binary search to determine a minimal threshold $\tau_i$ for each agent $i$, such that FFD$(\items,v_i,\tau_i)$ can allocate all chores. These thresholds $\tau_i$ are then used in the HFFD algorithm. By Corollary \ref{cor:monotonicity}, the HFFD algorithm will allocate all the chores, ensuring that each agent $i$ gets a bundle no larger than $\tau_i$.

However, this approach does not extend to the general case, as discussed in~\citep{huang2023reduction}. In general, combining all thresholds into the HFFD algorithm may not guarantee the allocation of all chores.

\section{Factored Instance}
\label{sec:factored}
A cost function is called \emph{factored} if any smaller cost is a factor of the next-larger cost, i.e., 
if $v(c_m) \leq \cdots \leq v(c_1)$ then
$v(c_m)\mid v(c_{m-1})\mid \dots\mid v(c_2)\mid v(c_1)$.
The notion first appeared with relation to bin packing in \citep{coffman1987bin}, where it is called ``divisible item sizes''. It was introduced into fair division in \citep{ebadian2021fairly}.

\begin{claim}[\cite{coffman1987bin}]
\label{Claim:exist-subset}
For a factored instance, given an index $j$ and a set $S$ of indices such that 

(1) $v(c_j)\ge v(c_i)$ for any $i\in S$ and 

(2) $ v(S)\ge v(c_j)$, 

\noindent
there is a subset $S'\subseteq S$ such that $v(S')=v(c_j)$.
\end{claim}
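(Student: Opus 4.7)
The plan is to prove the claim by induction on the index $L$ for which $v(c_j)=p_L$ in the factored chain $p_1 \mid p_2 \mid \cdots \mid p_k$. Since the argument only depends on the target value $p_L$ and not on the identity of $c_j$, it is convenient to work with the following (formally equivalent) form of the statement: for every $L \in [k]$ and every set of indices $S$ such that $v(c_i) \le p_L$ for all $i \in S$ and $v(S) \ge p_L$, there exists $S' \subseteq S$ with $v(S') = p_L$.

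In the base case $L = 1$, every item in $S$ has cost at most $p_1$ and, since costs are positive, cost exactly $p_1$. The hypothesis $v(S) \ge p_1 > 0$ forces $S$ to be nonempty, so any singleton $\{c_i\} \subseteq S$ serves as $S'$.

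For the inductive step, I would split into two cases. If $S$ contains some item of cost exactly $p_L$, the corresponding singleton is the desired subset. Otherwise every item in $S$ has cost at most $p_{L-1}$. The factored property gives $p_{L-1} \mid p_L$, so set $q := p_L/p_{L-1}$, a positive integer. I would then extract $q$ pairwise disjoint subsets $S_1,\ldots,S_q \subseteq S$, each of cost exactly $p_{L-1}$, by applying the inductive hypothesis (with target value $p_{L-1}$) repeatedly to the residual set. For the $t$-th extraction, the residual set $S \setminus (S_1 \cup \cdots \cup S_{t-1})$ still consists only of items of cost at most $p_{L-1}$ (an invariant of this case), and its total cost is at least $p_L - (t-1)p_{L-1} = (q-t+1)\,p_{L-1} \ge p_{L-1}$ as long as $t \le q$, so the inductive hypothesis applies. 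Taking $S' := S_1 \cup \cdots \cup S_q$ gives $v(S') = q \cdot p_{L-1} = p_L$, as required.

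The only care needed is verifying that the residual lower bound remains $\ge p_{L-1}$ across all $q$ extractions, which reduces to the direct arithmetic shown above. I do not foresee a genuine obstacle beyond choosing the right induction variable, since the divisibility chain $p_1 \mid p_2 \mid \cdots \mid p_k$ supplies the entire combinatorial content and there is no need for a greedy or exchange argument.
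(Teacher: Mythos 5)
Your argument is correct, but note that the paper does not actually prove this claim: it is imported verbatim from Coffman, Garey and Johnson's work on bin packing with divisible item sizes, so there is no in-paper proof to match against. Your induction on the level $L$ of the target value in the chain $p_1 \mid p_2 \mid \cdots \mid p_k$ is sound: the base case is right (all items of cost at most $p_1$ have cost exactly $p_1$ since $p_1$ is the minimum value, and positivity gives nonemptiness), and the inductive step correctly extracts $q = p_L/p_{L-1}$ disjoint subsets of cost $p_{L-1}$ each, with the residual-mass bookkeeping $(q-t+1)p_{L-1} \ge p_{L-1}$ verified. The standard proof in the cited source is instead a one-pass greedy/exchange argument: process the items of $S$ in decreasing order of cost and add each one whenever it fits under the target $v(c_j)$; the divisibility chain guarantees that the remaining deficit is always a nonnegative integer multiple of the cost of the next item considered, so the greedy sum lands exactly on $v(c_j)$ once the total mass suffices. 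The two arguments carry the same content (the divisibility chain does all the work in both), but yours yields a cleaner recursive decomposition of $p_L$ into $q$ blocks of value $p_{L-1}$, while the greedy version is more directly algorithmic and is the form typically invoked when one wants $S'$ computed in a single pass. Two cosmetic points: your base-case justification should cite minimality of $p_1$ among the value levels (positivity is only needed for nonemptiness of $S$), and in the inductive step you implicitly assume the $p_\ell$ are the \emph{distinct} cost values so that $q \ge 2$ and ``cost less than $p_L$'' implies ``cost at most $p_{L-1}$''; it is worth stating that normalization explicitly.
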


\begin{theorem}[\cite{coffman1987bin}]
\label{thm:ffd}
    In job scheduling with a factored cost function, FFD with threshold exactly equal to the minimum makespan always allocates all jobs. 
\end{theorem}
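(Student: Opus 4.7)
My plan is to argue by contradiction, leveraging the divisibility structure of factored cost functions together with a pigeonhole argument on bin loads. The key observation is that in a factored instance, any bin load contributed by \emph{large} chores is forced to be a multiple of the cost of the first chore that FFD fails to fit, and this severely constrains both FFD's state and the structure of any feasible allocation with makespan $\tau$.

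First I would suppose that FFD with threshold $\tau$ fails, and let $c^-$ be the first chore in the decreasing order that FFD cannot fit. Set $q = \lfloor \tau/v(c^-)\rfloor$ and $r = \tau - q\cdot v(c^-)$, so $0\leq r<v(c^-)$. Partition the chores into \emph{large} chores $L$ (those FFD processes before $c^-$) and \emph{small} chores $S$ (containing $c^-$ and every chore processed after). By the factored property, every cost in $L$ is an integer multiple of $v(c^-)$, so each of the $n$ FFD bins has total load that is a multiple of $v(c^-)$. Since $c^-$ did not fit anywhere, each FFD bin has load strictly greater than $\tau-v(c^-)$; the only multiple of $v(c^-)$ lying in $(\tau-v(c^-),\tau]$ is $q\cdot v(c^-)$, so each bin's load is exactly $\tau-r$. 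Hence the total cost of $L$ equals $n(\tau-r) = nq\cdot v(c^-)$, and (since by assumption all chores fit in $n$ bins of size $\tau$) the total cost of $S$ is at most $n\tau - n(\tau - r) = nr$.

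Next I would consider any allocation $\mathcal{O}^*$ with makespan $\tau$. In each bin of $\mathcal{O}^*$, the load contributed by $L$-chores is a multiple of $v(c^-)$ of value at most $\tau$, hence at most $q\cdot v(c^-)$. Because the total $L$-load across all $n$ bins of $\mathcal{O}^*$ equals $nq\cdot v(c^-)$, each bin of $\mathcal{O}^*$ must have $L$-load \emph{exactly} $q\cdot v(c^-)=\tau-r$. This leaves only $r$ free space in every bin for small chores — but $c^-\in S$ has cost $v(c^-)>r$, so $c^-$ cannot be placed in any bin of $\mathcal{O}^*$, contradicting its feasibility.

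The main obstacle will be cleanly setting up the partition $L$ versus $S$ (respecting the universal-ordering tie-breaking when several chores share $v(c^-)$) and rigorously justifying the ``only multiple in the interval $(\tau-v(c^-),\tau]$'' step; once these are in place both the FFD half and the $\mathcal{O}^*$ half become short counting arguments. The boundary case $r=0$ (when $\tau$ is a multiple of $v(c^-)$) is handled uniformly by the same reasoning: each FFD bin is already filled to $\tau$, consuming the entire budget $n\tau$, so $c^-$ has no room in \emph{any} allocation with makespan $\tau$, an immediate contradiction. Claim~\ref{Claim:exist-subset} is not needed for this direction, though it would be the natural tool if one instead wished to build an optimal allocation matching FFD's first bundle via induction.
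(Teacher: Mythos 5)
Your proof is correct. Note that the paper does not prove this theorem at all---it is imported from Coffman, Garey and Johnson (1987)---so there is no in-paper argument to compare against; the classical proof is an inductive exchange argument showing that FFD's first bin can always be completed to an optimal schedule, built on the subset-sum property (Claim~\ref{Claim:exist-subset}). Your route is genuinely different and more global: divisibility forces every FFD bin load, at the moment the first unplaceable chore $c^-$ appears, to equal the unique multiple of $v(c^-)$ in $(\tau-v(c^-),\tau]$, namely $q\cdot v(c^-)=\tau-r$; hence the total large-chore mass is exactly $nq\cdot v(c^-)$, and since each bin of the optimal schedule carries large-chore load at most $q\cdot v(c^-)$, every such bin is saturated to within $r<v(c^-)$ of capacity, leaving no room for $c^-$. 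All steps check: every cost at least $v(c^-)$ is a multiple of $v(c^-)$ in a factored instance (ties with $c^-$ included), all $n$ bins are necessarily open when $c^-$ is rejected, and an interval of length $v(c^-)$ contains exactly one multiple of $v(c^-)$. Two points to make explicit in a full write-up: since the paper's FFD never refuses a chore but opens new bins, ``$c^-$ cannot fit'' should be phrased as ``$c^-$ is the first chore placed in bin $n+1$''; and your argument never uses that $\tau$ \emph{equals} the minimum makespan, only that some feasible schedule with makespan at most $\tau$ exists, so you get FFD monotonicity for factored instances as a free corollary---a fact the paper instead derives separately via the Reduction Lemma. What the classical exchange proof buys in return is a constructive correspondence between FFD's bins and an optimal schedule, which is closer in spirit to the swapping machinery used elsewhere in the paper.
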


We now extend Theorem \ref{thm:ffd} from FFD to HFFD. The following is a special case of Lemma \ref{lem:reduction} for factored instances.

\begin{lemma*}[Reduction Lemma for Factored Instances]
For any threshold $\tau>0$, and any factored cost function $v$, if FFD($\items, v,\tau$) allocates all the chores into $n$ bins,
and $(\items, \mathcal{Q},v, \tau)$ is a First-Fit Valid tuple,
then $\mathcal{Q}$ contains all chores in $\items$.
\end{lemma*}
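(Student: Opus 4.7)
The plan is a strong induction on the bundle index $k$, maintaining the invariant that the remaining set $R_k := \items \setminus \bigcup_{j \le k} Q_j$ admits a packing into at most $n-k$ bins of size $\tau$. The base case $k=0$ is exactly the hypothesis that $\text{FFD}(\items, v, \tau)$ uses $n$ bins, and the lemma follows from the invariant at $k = n$, which forces $R_n = \emptyset$ and hence $\mathcal{Q}$ covers $\items$.

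For the inductive step, I would use Proposition~\ref{prop-lexiffd} to identify the first bundle of $\text{FFD}(R_{k-1}, v, \tau)$ with the $k$-th benchmark bundle $B_k$, so that $R_{k-1} \setminus B_k$ packs into at most $n-k$ bins. FFV gives $Q_k \lex{\ge} B_k$, and the lex-maximality of $B_k$ among subsets of $R_{k-1}$ with cost $\le \tau$ implies $v(Q_k) \ge v(B_k)$: any strictly lex-larger subset of $R_{k-1}$ must already have cost $> \tau \ge v(B_k)$. Setting $X_k := B_k \setminus Q_k$ and $Y_k := Q_k \setminus B_k$, this yields $v(Y_k) \ge v(X_k)$. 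Since $R_k = (R_{k-1} \setminus B_k) \cup X_k \setminus Y_k$, the task reduces to modifying the $(n-k)$-bin packing of $R_{k-1} \setminus B_k$ by deleting the chores in $Y_k$ (freeing at least $v(X_k)$ total capacity) and then inserting the chores of $X_k$ so every bin stays at cost $\le \tau$.

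The factored hypothesis enters through Claim~\ref{Claim:exist-subset}. I would process the chores of $X_k$ in decreasing order of cost. For each $x \in X_k$, if some bin already has free space $\ge v(x)$ I insert $x$ directly; otherwise I consolidate fragmented free capacity by one swap of a factored-compatible subset of chores between two bins, whose existence is furnished by Claim~\ref{Claim:exist-subset}, so as to create a bin with $\ge v(x)$ free space, and then insert $x$. Divisibility of factored costs is exactly what makes the consolidation always feasible: any two bins whose free spaces together total $\ge v(x)$ can be rearranged by exchanging an exact-cost factored subset.

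The hard part is this consolidation step: freed space may be split across many bins while a large chore of $X_k$ needs $\ge v(x)$ contiguous capacity in a single bin. Without divisibility the required swap need not exist, so the factored hypothesis is essential; the technical delicacies are choosing the insertion order (largest chore first, to avoid later blocking) and verifying that each local swap preserves the $\le \tau$ bound on every bin.
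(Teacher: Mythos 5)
There is a genuine gap at exactly the step you flag as the hard part: the consolidation of fragmented free space. The sub-claim that ``any two bins whose free spaces together total $\ge v(x)$ can be rearranged by exchanging an exact-cost factored subset'' so as to open up $\ge v(x)$ of room in one bin is false, even for factored costs. Take $\tau=12$ and two bins each containing a single chore of cost $8$: each has free space $4$, together $8=v(x)$ for a chore $x$ of cost $8$, yet no exchange helps --- putting both $8$-chores in one bin overflows ($16>12$), and every other rearrangement leaves free space $4$ in each bin. Claim~\ref{Claim:exist-subset} cannot rescue this: it produces a subset of cost \emph{exactly} $v(c_j)$ out of a pool of chores \emph{no larger than} $c_j$ with total cost $\ge v(c_j)$, whereas your consolidation needs a subset of one bin whose cost lands in a prescribed window $[v(x)-f_2,\,f_1]$, and the chores available need not be smaller than $x$. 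So the aggregate inequality $v(Y_k)\ge v(X_k)$ does not by itself let you re-insert $X_k$; this is precisely the fractional-capacity reasoning that bin packing defeats. A secondary, more repairable omission: to go from ``$R_{k-1}$ admits a packing into $n-k+1$ bins'' to ``$R_{k-1}\setminus B_k$ admits a packing into $n-k$ bins,'' where $B_k$ is FFD's first (lexicographically maximal) bundle, you need FFD to be \emph{optimal} for factored bin packing (the \citealt{coffman1987bin} result behind Theorem~\ref{thm:ffd}); for general costs, removing FFD's greedy first bundle can destroy an optimal packing, so this step also silently uses the factored hypothesis and should be cited.

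For contrast, the paper's proof avoids re-packing residual sets altogether. It runs in the opposite direction: it morphs the complete FFD output $\mathcal{P}$ into $\mathcal{Q}$ bundle by bundle, and at the first index $j$ where $v(Q_k[j])>v(P_k[j])$ it exchanges the chore $Q_k[j]$ (sitting in a later bundle of $\mathcal{P}$) with a subset of the \emph{tail} $T=\cup_{w\ge j}P_k[w]$. By the FFD ordering, every chore in $T$ costs at most $v(Q_k[j])$, which is exactly the hypothesis under which Claim~\ref{Claim:exist-subset} guarantees an exact-cost subset; and when $v(T)<v(Q_k[j])$ the whole tail is swapped, so later bundles never gain cost and the First-Fit-Valid comparison $Q_k\lex{\ge}P_k$ is preserved at every stage. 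That choice of direction is the structural reason the fragmentation problem never arises in the paper's argument, and it is where your proposal would need to be rebuilt.
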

\begin{proof}
Here is a high-level idea for the proof. Given an allocation $\mathcal{P}$ produced by the FFD algorithm and a First-Fit Valid allocation $\mathcal{Q}$, our goal is to transform $\mathcal{P}$ into $\mathcal{Q}$ using a series of swapping operations. 

We describe the swapping operations in Algorithm \ref{alg:reduce-factored} and provide an illustrative example below. 

\begin{example}
    Suppose the threshold is $10$ and $P_1 = \{a,b,c,d,e\}$ with chore costs $4,2,2,1,1$, and $Q_1 = \{w,x,y,z\}$ with chore costs $4,4,4,1$.
    Note that the instance is factored and $Q_1\lex{>}P_1$.
    At the inner for loop, the condition $v(Q_k[j])>v(P_k[j])$ is first satisfied for $j=2$.
    The set $T = \{b,c,d,e\}$ and its total cost is $6$ which is larger than $v(Q_k[2])$.
    We find the set $S = \{b,c\}$, remove it from $P_1$ and swap it with $\{x\}$.
    Now we have $P_1 = \{a,x,d,e\}$ with chore costs $4,4,1,1$.

        Next, after the swap, $v(Q_k[j])>v(P_k[j])$ for $j=3$. We have $T = \{d,e\}$, its total cost is $2$ which is now smaller than $v(Q_k[j])$. So we remove $T$ from $P_1$ and swap it with $\{y\}$. 
    The new $P_1 = \{a,x,y\}$ with chore costs $4,4,4$.

    Next, for $j=4$ we have 
    $v(Q_k[j])>v(P_k[j])$ as $v(P_k[j])=0$ by definition. $T=\emptyset$, and we swap it with $z$. So the final $P_1$ is $\{a,x,y,z\}$, which is lex-equivalent to $Q_1$.
\end{example}

\begin{algorithm}[t]
\KwData{$\mathcal{Q}$ a First-Fit Valid allocation; $\mathcal{P}$ output of FFD on same items.}
\For{k=1 \emph{\KwTo} N}{
    \If{$Q_k \lex{=} P_k$}{Continue to next $k$.} 
    \tcc{Else, $Q_k \lex{>} P_k$ by FFV}
    \tcc{Compare $Q_k$ with $P_k$ from largest to smallest chore}
    \For{$j$= 1 \emph{\KwTo} $|Q_k|$}    {
        \If{ $v(Q_k[j])>v(P_k[j])$}{
        Let $T=\cup_{w\ge j}P_k[w]$\; 
        \If{$v(T)\ge v(Q_k[j])$}{Find a set $S\subseteq T$ such that $v(S)=v(Q_k[j])$\;\tcc{Claim 
        \ref{Claim:exist-subset} guarantees existence of $S$}
        Update $\mathcal{P}=\swap{\mathcal{P}}{S}{Q_k[j]}$
        \label{line:swap-s}
        }\Else{
        Update $\mathcal{P}=\swap{\mathcal{P}}{T}{Q_k[j]}$
        \label{line:swap-t}
        }}
    }
}
\caption{Reduction by swapping for a factored cost function}\label{alg:reduce-factored}
\end{algorithm}

\begin{claim}\label{claim:factored}
    At each iteration $k$ of the main loop, the cost of any bundle $P_i$ with $i>k$ does not increase.
\end{claim}
\begin{proof}
The proof is by induction. The base $k=0$ is straightforward. For $k\geq 1$, assume that the claim holds for all iterations before $k$. We have $v(P_k)\le\tau$ since $\tau$ is the threshold used by FFD. By the First-Fit Valid property, if we check the chores of $P_k$ and $Q_k$ from largest to smallest, the first different place $j$ must satisfy $v(Q_k[j])>v(P_k[j])$. 

In the swap of line \ref{line:swap-s}, the swap is between two sets of equal cost, $v(S) = v(Q_k[j])$, so the costs of all bundles to not change. 

In the swap of line \ref{line:swap-t}, the swap is between sets with different costs, $v(T) < v(Q_k[j])$,
so the cost of $P_k$ increases and the cost of some bundle $P_i$ with $i>k$  decreases.
\end{proof}

By Claim \ref{claim:factored}, at the start of each iteration $k$, $v(P_k)\leq \tau$, so by First Fit Valid definition, it still holds that $Q_k \lex{\ge} P_k$.

During iteration $k$, after each iteration $j$ of the inner loop, the set $P_k[1,\ldots,j]$ is leximin-equivalent to $Q_k[1,\ldots,j]$.
If, for some $j\leq |Q_k|$, we have $|P_k|<j$, then $T=\emptyset$, and the algorithm will move some chores from some later bundles into $P_k$.
Thus, at the end of iteration $k$, we will have $P_k \lex{=} Q_k$. Consequently, the entire process ensures that $\mathcal{P}\lex{=}\mathcal{Q}$. 
As allocation $\mathcal{P}$ contains all chores, allocation $\mathcal{Q}$ must also include all chores. 
This completes the proof of Lemma~\ref{lem:reduction} for factored instances.
\end{proof}

\begin{theorem}
In chore allocation with factored cost functions,
HFFD with the thresholds of all agents equal to their maximin share always allocates all chores.
\end{theorem}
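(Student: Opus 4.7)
The plan is to apply the Reduction Lemma for factored instances (just established in Section \ref{sec:factored}), together with Proposition \ref{prop-HFFD-first-fit} and Theorem \ref{thm:ffd}, in a short chain of three steps.

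First I would set each agent's HFFD threshold to their own MMS value, $\tau_i := \vmms_i$, and let $\allocs$ denote the resulting HFFD output. Let $\omega$ denote the last agent in HFFD's processing order. By Proposition \ref{prop-HFFD-first-fit} applied with $\tau = \vmms_\omega \leq \tau_\omega$, the tuple $(\items, \allocs, v_\omega, \vmms_\omega)$ is First-Fit-Valid.

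Second, I would invoke Theorem \ref{thm:ffd}: by definition, $\vmms_\omega$ is exactly the minimum makespan when scheduling $\items$ on $n$ machines under the factored cost function $v_\omega$, so FFD$(\items, v_\omega, \vmms_\omega)$ allocates all chores into $n$ bins. Finally, applying the Reduction Lemma for factored instances with $\mathcal{Q} = \allocs$, $v = v_\omega$, and $\tau = \vmms_\omega$, I would conclude that $\allocs$ likewise contains every chore, so HFFD succeeds.

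The heavy lifting was done earlier in proving the Reduction Lemma itself; given it, the only substantive check here is that the threshold bound needed by Proposition \ref{prop-HFFD-first-fit} and the threshold required by Theorem \ref{thm:ffd} both coincide with $\vmms_\omega = \tau_\omega$, which holds by construction. Hence I do not expect any genuine technical obstacle beyond what has already been handled; the theorem is essentially a one-paragraph corollary that packages together the FFV framework, the single-agent optimality of FFD on factored instances, and the swapping-based reduction from $\mathcal{Q}$ to the FFD output.
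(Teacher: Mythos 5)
Your proposal is correct and matches the paper's argument: the paper derives the theorem from Theorem \ref{thm:ffd} together with Corollary \ref{cor:hffd}, and your chain (Proposition \ref{prop-HFFD-first-fit} gives FFV for the last agent, Theorem \ref{thm:ffd} gives that FFD with threshold $\vmms_\omega$ succeeds, and the Reduction Lemma transfers this to the HFFD output) is exactly the content of that corollary unpacked. No gaps.
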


\begin{proof}
This follows from Lemma \ref{thm:ffd} and Corollary \ref{cor:hffd}.
\end{proof}

\section{Bivalued Instance}
\label{sec:bivalued}
In this section, we prove Lemma \ref{lem:reduction} for bivalued instances.
Most proofs of the auxiliary claims are deferred to the Appendix.

\begin{lemma*}[Reduction Lemma for Bivalued Instances]
For any threshold $\tau>0$, and any bivalued cost function $v$, if FFD($\items, v,\tau$) allocates all the chores into $n$ bins,
and $(\items, \mathcal{Q},v, \tau)$ is a First-Fit Valid tuple,
then $\mathcal{Q}$ contains all chores in $\items$.
\end{lemma*}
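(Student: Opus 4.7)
The plan is to mirror the swap-based argument of the factored case: start with $\mathcal{P} := \mathrm{FFD}(\items, v, \tau)$ (which by hypothesis contains every chore) and transform $\mathcal{P}$ via a sequence of chore swaps into an allocation that agrees with $\mathcal{Q}$ bundle by bundle in the lex order. In the bivalued setting (write $v \in \{a, b\}$ with $a > b > 0$), lex-equivalence of two bundles reduces to equality of their chore counts $(n_a, n_b)$, so once $P_k \lex{=} Q_k$ for every $k$, the conservation identities $\sum_k n_a(Q_k) = n_a$ and $\sum_k n_b(Q_k) = n_b$ force $\mathcal{Q}$ to contain every chore.

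The only adjustment to Algorithm~\ref{alg:reduce-factored} is to always apply the line~\ref{line:swap-t} rule: since in general $b \nmid a$, a subset $S \subseteq T$ with $v(S) = v(Q_k[j])$ need not exist, so line~\ref{line:swap-s} is unavailable and I instead swap the whole tail $T := \bigcup_{w \ge j} P_k[w]$ (which is a set of $b$-chores or empty, since $v(Q_k[j]) = a$) for an $a$-chore drawn from some later bundle $P_{k''}$ at every disagreement $v(Q_k[j]) > v(P_k[j])$; when $v(Q_k[j]) = b$ and $v(P_k[j]) = 0$, this reduces to transferring a $b$-chore from a later bundle into $P_k$. Two bookkeeping invariants, established just as in the factored proof, make the modification work: (i) every swap weakly decreases the cost of every later bundle $P_i$ with $i > k$, because the donor gives up one chore and receives back only $|T| \cdot b < v(Q_k[j])$ units; (ii) the donor chore always exists, by the outer-loop invariant $\sum_{i < k} n_a(P_i) = \sum_{i < k} n_a(Q_i)$ (and analogously for $b$), which yields $\sum_{i > k} n_a(P_i) \ge n_a(Q_k) - n_a(P_k)$ and the analogous bound for $b$. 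Invariant~(i) also maintains $v(P_k) \le \tau$ at the start of each outer iteration, so $\mathcal{Q}$'s First-Fit-Valid hypothesis yields $Q_k \lex{\ge} P_k$ as the factored argument requires.

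The main obstacle is verifying that, at the end of the inner loop for iteration $k$, one has $P_k \lex{=} Q_k$ --- in particular that $P_k$ does not keep extra $b$-chores past position $|Q_k|$, which would give $P_k \lex{>} Q_k$ and break the reduction. The crucial structural observation, deferred to the appendix, is that a saturated FFD bundle with $n_a(P_k) = K := \lfloor \tau/a \rfloor$ automatically satisfies $n_b(P_k) \le \lceil a/b \rceil - 1$, because $\tau - Ka < a$ implies $\lfloor (\tau - Ka)/b \rfloor < \lceil a/b \rceil$, and moreover $(\lceil a/b \rceil - 1)\, b < a$. Consequently, whenever FFV forces $n_a(Q_k) > n_a(P_k) = K$, at the very first $a$-swap of iteration $k$ we have $v(T) = n_b(P_k) \cdot b < a$, so the single ``swap all of $T$'' step empties $P_k$ of all $b$-chores in one shot; every subsequent $a$-swap in the same iteration has $T = \emptyset$ and simply inserts an $a$-chore, so the $a$-section of $Q_k$ ends with $(n_a(P_k), n_b(P_k)) = (n_a(Q_k), 0)$, and the $b$-section of $Q_k$ then refills $n_b(P_k)$ up to $n_b(Q_k)$ via $b$-transfers. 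In the complementary regime $n_a(Q_k) = n_a(P_k)$ (which holds whenever $n_a(P_k) < K$ or $n_a(Q_k) \le K$), FFV combined with $n_b(B_k) = n_b(P_k)$ yields $n_b(Q_k) \ge n_b(P_k)$, so no $a$-swaps occur and only $b$-transfers are needed. A case-by-case verification in the appendix then confirms that in every regime the inner loop exits with $(n_a(P_k), n_b(P_k)) = (n_a(Q_k), n_b(Q_k))$, completing the reduction.
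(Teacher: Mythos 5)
Your overall strategy is the same as the paper's: keep $\mathcal{P}=\mathrm{FFD}(\items,v,\tau)$, which packs everything, and convert it bundle-by-bundle into something lex-equal to $\mathcal{Q}$ by swapping one large chore from a later bundle against the small-chore tail of $P_k$, using the fact that a saturated FFD bin's small chores total less than one large chore, and maintaining that later bundles never gain cost so that First-Fit-Validity keeps yielding $Q_k\lex{\ge}P_k$. However, there is a genuine gap in your invariant (i): you draw the donor $a$-chore from ``some later bundle'' and you apply the structural bound $v(S(P_k))<a$ to the bundle $P_k$ \emph{as it stands at iteration $k$}, but that bound is a property of the pristine FFD output, and earlier iterations can destroy it. A bundle $P_i$ that serves as donor in iteration $k'<i$ loses a large chore and \emph{receives} $S(P_{k'})$; it may then reach its own iteration with $n_a(P_i)<\lfloor\tau/a\rfloor$, with $n_a(Q_i)>n_a(P_i)$ still forced, and with $v(S(P_i))\ge a$. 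Concretely, take $\tau=11$, $a=4$, $b=3$, $P_1=P_2=\{4,4,3\}$, $P_3=\{4,3,3\}$, $Q_1=\{4,4,4\}$, $Q_2=\{4,4,3\}$: if iteration $1$ takes its donor chore from $P_2$, then at iteration $2$ the tail is $T=\{3,3\}$ with $v(T)=6>a$, the swap \emph{increases} the donor's cost above $\tau$, and both your dichotomy ($n_a(Q_k)>n_a(P_k)$ only when $n_a(P_k)=K$) and invariant (i) fail; once a later bundle exceeds $\tau$ before its own iteration, FFV no longer gives $Q_z\lex{\ge}P_z$ and the induction collapses.

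The paper closes exactly this hole by always taking the donor to be the large chore that appears \emph{last} in $\mathcal{P}$, and proving separately (part 2 of its claim) that any bundle $P_k$ which will eventually need a large-chore infusion --- i.e., with $|L(Q_k)|>|L(P_k)|$ initially --- is never chosen as a donor in iterations $1,\ldots,k-1$, because a counting argument shows the last large chore always lies strictly after such a bundle. Hence those bundles arrive at their own iteration unmodified (in large-chore count), the bound $v(S(P_k))<a$ is legitimate, and the donor's cost never increases. To repair your proof you would need to add this donor-selection rule and the accompanying ``never-depleted'' claim; the rest of your argument (donor existence by counting, the one-shot emptying of $S(P_k)$, the $b$-transfers, and the final counting step) then goes through.
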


The reduction process for this case is described in Algorithm \ref{alg:reduce-bivalue}, and we provide an illustrative example below. For any bundle $A$, we denote by $L(A)$ the set of large chores in $A$, and by $S(A)$ the set of small chores in $A$.

\begin{algorithm}[t]
\KwData{$\mathcal{Q}$ a First-Fit Valid allocation; $\mathcal{P}$ output of FFD on same items.}
\For{k=1 \emph{\KwTo} N}{
    \If{$Q_k \lex{=} P_k$}{Continue to next $k$.}
    \tcc{Else $Q_k \lex{>} P_k$ by FFV, so either {\small $|L(Q_k)|>|L(P_k)|$}, or {\small $|L(Q_k)|=|L(P_k)|$} and {\small $|S(Q_k)|>|S(P_k)|$}.}
    \If{$|L(Q_k)|>|L(P_k)|$}
    {Let $cl$ be the large chore which appears last in allocation $\mathcal{P}$\;
    \tcc{We prove later that chore $cl$ must be in some bundle $P_i$ for $i>k$}
    Update $\mathcal{P}=\swap{\mathcal{P}}{cl}{S(P_k)}$\;
    }
    \tcc{At this point we can assume wlog that $Q_k\supseteq P_k$.}
    \For{$c\in Q_k\setminus P_k$}
    {
    Let chore $cl$ be the one that appears last in allocation $\mathcal{P}$ with the same cost as $c$\;
    Update $\mathcal{P}=\swap{\mathcal{P}}{cl}{\emptyset_k}$\;
    }
    }
\caption{Reduction by swapping for a bivalued cost function}\label{alg:reduce-bivalue}
\end{algorithm}

\begin{example}

    Suppose the threshold is $20$ and $P_1 = \{a,b,c,d,e\}$ with chore costs $7,7,2,2,2$
    and $Q_1 = \{u,v,w,x,y,z\}$ with chore costs $7,7,7,7,2,2$.
    We have $|L(Q_1)| = 4 > 2 = L(P_1)$,
    so there must be a large chore in some bundle $P_i$ with $i>1$; we take the last such chore (say $x$) and swap it with $c,d,e$, so now $P_1 = \{a,b,x\}$ with chore costs $7,7,7$.

    Now, we can assume w.l.o.g. that the large chores in $P_1$ are the same as some large chores in $Q_1$, say $u,v,x$, so $P_1 = \{u,v,x\}$ and $P_1\subseteq Q_1$.
    In the next loop, we simply add to $P_1$ the missing chores $w,y,z$ or some other chores with the same costs.
    
\end{example}
\begin{claim}
1. In iteration $k$, the cost of any bundle $P_i$ with  $i>k$ does not increase. 

2. If initially the number of large chores in $Q_k$ is greater than in $P_k$, then the number of large chores in $P_k$ does not change in iterations $1$ to $k-1$. 
\end{claim}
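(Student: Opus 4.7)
My plan is to prove both parts simultaneously by induction on $k$, relying on two structural observations about Algorithm~\ref{alg:reduce-bivalue}. First, every swap in iteration $j$ exchanges chores only between $P_j$ and some later bundle $P_{i'}$ with $i'>j$; consequently, $L(P_k)$ can only shrink during iterations $1,\ldots,k-1$ (never grow), and $|L(\bigcup_{i>k} P_i)|$ is non-increasing over time. Second, at the start of iteration $k$ the previous iterations have already arranged $P_i \lex{=} Q_i$ for all $i<k$, so the multisets of costs in $\bigcup_{i<k} P_i^{(k)}$ and $\bigcup_{i<k} Q_i$ agree; subtracting from the fixed global totals yields the conservation identity $|L(\bigcup_{i\ge k} P_i^{(k)})|=|L(\bigcup_{i\ge k} Q_i)|$, with analogous identities for small chores and for each individual cost value, where $P_i^{(k)}$ denotes the state of $P_i$ at the start of iteration $k$.

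For Part~2, suppose for contradiction that $|L(P_k)|$ decreased in some iteration $j<k$, and let $j$ be the earliest such. Since Algorithm~\ref{alg:reduce-bivalue} removes a chore from $P_k$ in iteration $j$ only when $P_k$ contains the \emph{last} occurrence of the targeted cost in $\mathcal{P}^{(j)}$, a large chore can be removed only if $P_k$ houses the final large chore of $\mathcal{P}^{(j)}$, which forces $|L(\bigcup_{i>k} P_i^{(j)})|=0$. By the monotonicity above, $|L(\bigcup_{i>k} P_i^{(k)})|=0$ as well, and the conservation identity yields $|L(P_k^{(k)})|=|L(\bigcup_{i\ge k} Q_i)|\ge|L(Q_k)|$, contradicting the hypothesis that $Q_k$ has more large chores than $P_k$ initially.

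For Part~1, I analyse the two operations of iteration $k$ separately. In the second for-loop the chore $cl'$ moved into $P_k$ is drawn from the bundle $P_j$ containing the last chore of cost $v(c)$; the conservation identity restricted to cost $v(c)$ guarantees that at least one chore of that cost lies in $\bigcup_{i\ge k} P_i^{(k)}$ (matching the copy of $c$ inside $Q_k$), hence $j\ge k$, and the operation only decreases $v(P_j)$ (or is a no-op when $j=k$). For the large-chore swap, Part~2 gives $|L(P_k^{(k)})|=|L(P_k^{(1)})|$ under the triggering hypothesis, which also implies that $S(P_k)$ cannot have grown in earlier iterations---any such growth would require a prior large-chore swap that simultaneously removes a large chore from $P_k$, precisely what Part~2 forbids---so $v(S(P_k^{(k)}))\le v(S(P_k^{(1)}))$. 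A short book-keeping comparison of the remaining-large-chore tallies in $\mathcal{P}$ and $\mathcal{Q}$, exploiting the FFV lower bound $|L(Q_i)|\ge\min(L_i^{(\mathcal{Q})},q)$ with $q:=\lfloor\tau/b\rfloor$, forces $|L(P_k^{(1)})|=q$ whenever the large-swap condition $|L(Q_k)|>|L(P_k^{(1)})|$ holds; FFD's greedy filling then gives $v(S(P_k^{(1)}))\le\tau-qb<b$. Combining, $v(S(P_k^{(k)}))<b=v(cl)$, so swapping $cl$ out of $P_{i'}$ in exchange for $S(P_k)$ strictly decreases $v(P_{i'})$ rather than increasing it.

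The most delicate step is the book-keeping in Part~1 that excludes the case $|L(P_k^{(1)})|<q$. It must rest on FFV's guarantee that every earlier $Q_i$ absorbs at least as many large chores as FFD does, while accommodating the possibility that $v(Q_i)$ legitimately exceeds $\tau$; without this step one cannot conclude the essential inequality $v(S(P_k^{(1)}))<b$, which is exactly what makes the large swap decrease---rather than increase---the donor bundle's cost.
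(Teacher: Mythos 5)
Your proof is correct and follows essentially the same route as the paper's: a joint induction in which Part 2 is established by showing that the last large chore of $\mathcal{P}$ can never lie in a bundle $P_k$ with $|L(Q_k)|>|L(P_k)|$, and Part 1 rests on the FFD-greediness bound $v(S(P_k))<l$ so that the large-chore swap strictly decreases the donor bundle's cost. Two small points: your ``conservation identity'' should be the inequality $|L(\bigcup_{i\ge k}P_i^{(k)})|\ge|L(\bigcup_{i\ge k}Q_i)|$ rather than an equality, since $\mathcal{Q}$ is not yet known to contain all chores (that is precisely the conclusion of the Reduction Lemma) --- fortunately the direction you need still holds; and your detour through $|L(P_k^{(1)})|=\lfloor\tau/l\rfloor$ via FFV lower bounds on the $|L(Q_i)|$ is valid but unnecessary, as $v(S(P_k^{(1)}))<l$ follows directly from the fact, already established for Part 2, that some $P_i^{(1)}$ with $i>k$ contains a large chore, so FFD must have rejected a large chore when it closed the large-chore phase of $P_k$. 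Your explicit verification that $S(P_k)$ cannot have grown before iteration $k$ (because growth would require a prior removal of a large chore from $P_k$, which Part 2 forbids) is a detail the paper leaves implicit and is a worthwhile addition.
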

\begin{proof}
We prove both claims by induction. 

For $k=0$, both claims hold trivially. 
Suppose that both claims are true before iteration $k$. We prove that both claims hold after iteration $k$.

Suppose first that $|L(Q_k)|>|L(P_k)|$.
At the start of iteration $k$, the algorithm has already made all bundles $1,\ldots,k-1$ equal in $\mathcal{P}$ and $\mathcal{Q}$.
Since $\mathcal{P}$ allocates all large chores, there must be another bundle $i>k$ such that 
$|L(Q_i)|<|L(P_i)|$. In particular, $P_i$ contains at least one large chore, so the last large chore in $\mathcal{P}$ is not in bundle $P_k$.
Therefore, in iterations $1,\ldots,k-1$, no large chore was taken from $P_k$.
The algorithm also never adds large chores to later bundles. Overall, the number of large chores in $P_k$ does not change.

As $P_k$ is generated by FFD algorithm, 
and at least one large chore is processed later than $P_k$, 
adding one more large chore to $P_k$ would exceed the threshold. Therefore, the cost of $S(P_k)$ is smaller than one large chore. 
So when we do the swapping, the cost of bundles with index larger than $k$ does not increase.
This means that, before iteration $k$, we have $v(P_k)\le\tau$. By the First-Fit Valid property, we have $|L(Q_k)|\ge |L(P_k)|$, so the algorithm  is valid.
\end{proof}

As in the factored case, at the end of iteration $k$, we have $P_k \lex{=} Q_k$, so at the end of the algorithm  $\mathcal{P}\lex{=}\mathcal{Q}$. 
As $\mathcal{P}$ contains all chores, $\mathcal{Q}$ contains all chores too. 
This completes the proof of Lemma \ref{lem:reduction} for bivalued instances.

Following Corollary \ref{cor:hffd},
in order to prove an approximation ratio of HFFD on bivalued instances, it suffices to prove an approximation ratio of MultiFit. We are not aware of previous work analyzing the performance of MultiFit on bivalued instances. In this section, we establish a tight approximation ratio of $15/13$.

Throughout this section, we denote the large chore size by $l$, the small chore size by $s$, and the MMS value by $\vmms$.

\subsection{Lower bound}
The lower bound of $15/13$ is shown using a simple example.
\begin{example}
\label{exm:15/13}
There are $n=3$ agents. All agents have the same valuation.
There are three large chores with cost $l=4$,
and nine small chores with cost $s=3$. 
~
The MMS is $13$, as the chores can be partitioned into three bundles containing 1 large chore and 3 small chores each.
~
However, FFD with any bin size in $[13,15)$ packs the three large chores into a single bin, and eight small chores into the following two bins, so one small chore remains unallocated.
\end{example}

\subsection{Upper bound}
Our proof for the upper bound uses a technique similar to that of Section \ref{sec:factored}. We establish the following theorem using an involved case analysis.

\begin{theorem}
\label{thm:bivalued}
    For personalized bivalued instances of chores, the HFFD algorithm achieves a tight approximation ratio of $\frac{15}{13}$ for MMS.
\end{theorem}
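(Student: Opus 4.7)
The plan is to invoke Corollary \ref{cor:hffd}, which reduces the claim for HFFD on personalized bivalued instances to the analogous claim for MultiFit on a single bivalued cost function. It therefore suffices to prove that for any bivalued $v$ with MMS value $\mu$, FFD with threshold $\tau = (15/13)\mu$ uses at most $n$ bins; combined with Example \ref{exm:15/13} this gives a tight ratio. After normalizing $\mu = 13$ and $\tau = 15$, I argue by contradiction: assume FFD opens an $(n+1)$-th bin, containing a chore $c$, so every first-$n$ bin $B_i$ satisfies $v(B_i) > 15 - v(c)$ at the moment $c$ is processed.

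Splitting on whether $c$ is large (cost $l$) or small (cost $s$), the large case is easily ruled out, since $l \le \mu = 13 < 15$ so a single large chore always fits into an empty bin; hence any offending bin must already contain several large chores, and a quick counting of $n_L$ against $n_L \cdot l \le T \le 13n$ yields the contradiction. The substantive case is $c$ small: by that point all large chores are already placed, every bin satisfies $v(B_i) > 15 - s$, and summing gives $T = n_L l + n_S s > n(15 - s) + s$; combined with $T \le 13n$ this forces $s > 2n/(n-1)$. Together with $l \le 13$ this pins the parameters into a narrow arithmetic window. I would then case-analyze on $k^{\ast} := \lfloor 13/l \rfloor$, the maximum number of large chores per MMS bundle, and $k^{\ast\ast} := \lfloor 15/l \rfloor$, the analogous quantity for FFD bins. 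When $k^{\ast} \ge 2$, MMS bundles carry multiple large chores and one can match FFD bins to MMS bundles to show FFD has at least as much residual room for small chores as MMS, producing a direct contradiction. When $k^{\ast} = 1$, every MMS bundle contains at most one large chore, so $n_L \le n$, and FFD may consolidate up to $k^{\ast\ast}$ large chores into a single bin.

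The main obstacle is the subcase $k^{\ast} = 1$ and $k^{\ast\ast} = 2$, which is precisely the pattern exhibited by Example \ref{exm:15/13}: FFD pairs up large chores in some bins, leaving other bins small-only, and it is only in this regime that the ratio $15/13$ becomes tight. My strategy is a counting argument that attributes to each ``consolidated'' FFD bin an MMS bundle containing exactly one large chore and enough small chores to approach cost $13$, and then shows that the resulting demand on the $n_S$ small chores exceeds the supply --- equivalently, that $n_L l + n_S s > 13n$, contradicting the existence of the MMS partition. I expect the analysis to bottom out exactly at $l/s = 4/3$, matching the instance of Example \ref{exm:15/13} and confirming that the constant $15/13$ cannot be improved.
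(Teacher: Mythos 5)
Your opening move matches the paper exactly: reduce via Corollary \ref{cor:hffd} to the single-cost-function question of whether FFD with threshold $\tau=\tfrac{15}{13}\vmms$ uses at most $n$ bins, and pair the upper bound with Example \ref{exm:15/13} for tightness. Your handling of the easy regimes is also sound: an unallocated \emph{large} chore is ruled out by comparing $\lfloor \tau/l\rfloor$ per FFD bin against $\lfloor \vmms/l\rfloor$ per MMS bundle, and the summation $T>n(\tau-s)+s$ versus $T\le n\vmms$ recovers the paper's Lemma \ref{lem:tau=H+s} / Corollary \ref{cor:6.5s} (i.e., one may assume $\vmms<6.5s$). Up to this point you are fine, and your intended route (a global counting contradiction rather than the paper's swap-based transformation of $\mathcal{Q}$ into $\mathcal{P}$ with loop invariants) is a genuinely different strategy.

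However, there is a real gap in the part that carries all the weight. First, your case decomposition is organized around the \emph{capacities} $k^{\ast}=\lfloor \vmms/l\rfloor$ and $k^{\ast\ast}=\lfloor\tau/l\rfloor$, but these do not determine the actual contents of the bundles: an MMS bundle may contain anywhere from $0$ to $k^{\ast}$ large chores, so the assertion that ``when $k^{\ast}\ge 2$, MMS bundles carry multiple large chores'' and hence yield ``a direct contradiction'' by matching is unjustified --- indeed the danger is exactly that FFD consolidates large chores into bins whose residual space $\tau-k^{\ast\ast}l$ is wasted (smaller than $s$), while the MMS bundles spread large chores out and use their residual space for small chores; one must show the total waste stays below the slack $(\tau-\vmms)n=2n$, which is the entire content of the theorem. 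Second, you have misidentified the hard regime: in Example \ref{exm:15/13} we have $l=4$, $\vmms=13$, $\tau=15$, so $k^{\ast}=k^{\ast\ast}=3$, not $k^{\ast}=1$, $k^{\ast\ast}=2$; the MMS bundles there contain one large chore each only because there are three large chores in total, not because of any capacity constraint. The paper's proof handles precisely this difficulty by enumerating all feasible pairs of chore counts $(a_q,b_q)$ in $Q_k$ and $(a_p,b_p)$ in $P_k$ (the 35 rows of Figure \ref{fig:PkQk}), deriving per-case bounds $l>(15b_q-13b_p-13)s/(13a_p-15a_q)$ via Lemma \ref{lem:PkQk}, and then checking several sub-cases on the donor bundle $Q_z$. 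Your proposal replaces all of that with ``I expect the analysis to bottom out at $l/s=4/3$,'' which is a conjecture, not a proof. To salvage your route you would need to actually carry out the counting over the true distributions of large chores across bundles, which will inevitably reproduce a case analysis of comparable complexity.
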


\begin{proofsketch}
    Algorithm~\ref{alg:simulate-bivalue} 
    accepts as input an allocation $\mathcal{Q}$ with maximum cost $\vmms$,
    and an allocation $\mathcal{P}$ generated by FFD with threshold $\tau := \frac{15}{13}\vmms$. 
    The algorithm processes both allocations bin by bin. In each iteration $k$, it modifies $\mathcal{Q}$ such that $Q_k$ becomes equal to $P_k$.
    The modification is done by swapping some chores in $Q_k$ with some chores from bundles $Q_z$ for $z>k$.
    Using a detailed case analysis we show that, even after the swap, the cost of $Q_z$ remains at most $\tau$.
    At the end of iteration $n$,
    $Q_i=P_i$ for all $i\in[n]$.
    As $\mathcal{Q}$ contains all chores, this implies that 
    $\mathcal{P}$ allocates all chores in $n$ bins, which concludes the proof.
\end{proofsketch}

We now provide the complete proof.

We are given a single cost function $v$,
and a list $\mathcal{Q}$ of $n$ bundles such that the value of each bundle is at most $\vmms$ ($\mathcal{Q}$ can represent the MMS partition of an agent with cost function $v$ and MMS value $\vmms$).

We set a threshold $\tau := \frac{15}{13}\vmms$, and prove that FFD with bin-size $\tau$ packs all the items in $\mathcal{Q}$ in at most $n$ bins.

We first handle the case that $s$ is small w.r.t. $\vmms$.

\begin{lemma}
\label{lem:tau=H+s}
    For any $l$ and $s$, FFD with bin size $\tau \geq \vmms+s$ packs all chores in at most $n$ bins.
\end{lemma}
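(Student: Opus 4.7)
I would argue by contradiction: suppose FFD with threshold $\tau \geq \vmms+s$ uses strictly more than $n$ bins, and let $c$ be the first chore whose placement triggers opening bin $n+1$. The existence of an MMS partition witnessing value $\vmms$ gives two global constraints I can contradict: a total-mass constraint $\sum_{c\in \items} v(c) \le n\vmms$, and a large-chore count constraint $L \le n\lfloor \vmms/l\rfloor$, where $L$ is the total number of large chores. The slack $\tau - \vmms \ge s$ is just enough to make one of these constraints bite, depending on whether $c$ is small or large.

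If $c$ is a small chore (cost $s$), then because $c$ fails to fit into any of the $n$ currently opened bins, each such bin has load strictly greater than $\tau - s \ge \vmms$. Summing, the total cost of chores already placed strictly exceeds $n\vmms$, contradicting the MMS upper bound on the total cost of $\items$. If instead $c$ is a large chore (cost $l$), then since FFD processes chores in descending order of cost, only large chores have been placed so far, so each of the $n$ bins contains some integer number $j_i$ of large chores and nothing else. The fact that $c$ does not fit anywhere gives $(j_i+1)l > \tau$, and the bin capacity gives $j_i l \le \tau$, which together pin $j_i = k_l := \lfloor \tau/l\rfloor$ for every $i$. Counting $c$ itself, at least $n k_l + 1$ large chores have been encountered. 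But $\vmms \le \tau$ implies $\lfloor \vmms/l\rfloor \le k_l$, so any MMS partition forces $L \le n\lfloor \vmms/l\rfloor \le n k_l$ --- the desired contradiction.

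I do not foresee any real obstacle; the proof is essentially a standard FFD-for-bin-packing argument specialized to the bivalued structure, and the slack $\tau - \vmms \ge s$ is tight enough for both branches to close. Edge cases ($L=0$, $S=0$, or $l=s$) all fall out of the same dichotomy without separate treatment, and no appeal to the First-Fit Valid machinery from earlier sections is required since we are analyzing plain FFD under a single cost function.
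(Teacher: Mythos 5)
Your proof is correct and follows essentially the same route as the paper's: the large-chore phase is handled by counting large chores per bin against the bound $n\lfloor \vmms/l\rfloor$ implied by the MMS partition, and the small-chore phase by the total-mass bound $n\vmms$ combined with the slack $\tau-\vmms\geq s$. The only difference is presentational — you argue by contradiction on the first chore that opens bin $n+1$, while the paper argues directly phase by phase — so no further comparison is needed.
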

\begin{proof}
    FFD processes the large chores first. 
    Every bin (except maybe the last one) receives at least $\lfloor \vmms/l\rfloor$ large chores, whereas every bin in the optimal packing receives at most $\lfloor \vmms/l \rfloor$ large chores. Hence, all large chores are packed into at most $n$ bins.
    
    FFD then processes the small chores. As the total cost of all chores is at most $n\cdot \vmms$, at least one of the first $n$ bins has a total cost of at most $\vmms$; this bin has at least $\tau-\vmms\geq s$ remaining space. Therefore, every small chore can be packed into one of the first $n$ bins.
\end{proof}

\begin{corollary}
\label{cor:6.5s}
 if $\vmms\geq 6.5s = \frac{13}{2}s$, then FFD with bin size $\tau=\frac{15}{13}\vmms$ packs all items into $n$ bins. 
\end{corollary}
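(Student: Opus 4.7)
The plan is to derive this corollary directly from Lemma~\ref{lem:tau=H+s} by a single arithmetic check. Lemma~\ref{lem:tau=H+s} already guarantees that FFD with bin size at least $\vmms + s$ packs all chores into at most $n$ bins, so the only thing left is to verify that the hypothesis $\vmms \geq \tfrac{13}{2} s$ forces the chosen threshold $\tau = \tfrac{15}{13}\vmms$ to meet this sufficient bin size.

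Concretely, I would rearrange the inequality $\tfrac{15}{13}\vmms \geq \vmms + s$ to the equivalent form $\tfrac{2}{13}\vmms \geq s$, i.e.\ $\vmms \geq \tfrac{13}{2} s$, which is exactly the hypothesis of the corollary. Therefore $\tau \geq \vmms + s$, and Lemma~\ref{lem:tau=H+s} applies directly with this $\tau$, yielding a packing of all chores into at most $n$ bins.

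There is no substantive obstacle here; the corollary is essentially a calibration statement, choosing the regime in which the simpler bound $\tau \geq \vmms + s$ is already implied by the target ratio $\tfrac{15}{13}$. The remaining ``hard'' case, where $\vmms < \tfrac{13}{2} s$, is exactly what the subsequent case analysis in the proof of Theorem~\ref{thm:bivalued} must handle, since there the gap $\tau - \vmms = \tfrac{2}{13}\vmms$ is strictly less than $s$ and one can no longer guarantee that an arbitrary small chore fits into the first bin with slack.
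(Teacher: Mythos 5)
Your proof is correct and takes exactly the approach the paper intends: since $\tau - \vmms = \tfrac{2}{13}\vmms$, the hypothesis $\vmms \geq \tfrac{13}{2}s$ is equivalent to $\tau \geq \vmms + s$, and Lemma~\ref{lem:tau=H+s} then applies directly. The paper states the corollary without writing out this one-line arithmetic, but your derivation is precisely the implicit argument.
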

Based on the corollary, from now on we focus on the case $\vmms<6.5s$.

\begin{algorithm}
    \KwData{
	    $v$ --- a cost function;
	    \\
	    $\mathcal{Q}$ --- an $n$-maximin-share partition of $v$, with MMS value $\vmms$;
	    \\
	    $\tau$ --- a threshold for FFD, $\tau := \frac{15}{13}\cdot \vmms$;
	    \\
	    $\mathcal{P}$ --- the output of FFD with cost function $v$
	    and bin size $\tau$.
    }
    \noindent\rule{8cm}{0.4pt}
    \\

    Order the bundles in $\mathcal{Q}$ by the number of large chores, from many to few. Break ties by the number of small chores. \\
    \label{step:order} 
    
    \For{k=1 \emph{\KwTo} n}{
         {\small // \emph{INVARIANT 1: $Q_i = P_i$ for all $i < k$.}
         \\
         // \emph{INVARIANT 2: $v(Q_k)\leq \vmms$, and $v(Q_i) \leq \tau$ for all $i > k$.}
         \\
         // \emph{INVARIANT 3: one of the following holds for all $i> k$:
         (a) $v(Q_i) \leq \vmms$, or 
         (b) $Q_i$ contains no large chores, or 
         (c) $Q_i$ contains one large chore and some $b_i$ small chores and $(b_i+2)s\leq \tau$.
         }}\\
    \If{$|Q_k|>|P_k|$}     {
       \label{step:inner-loop-1}
	    Let $S$ be a set of two small chores in $Q_k$.\\
	    {\small \emph{~~~~// Lemma \ref{lem:twosmallchores} shows that they exist.}}
	     \label{step:twosmallchores}
	
	    Let $z := $ max index of a bundle in $\mathcal{Q}$ that contains a large chore.\\
	    {\small \emph{~~~~// Lemma \ref{lem:twosmallchores} shows that $z>k$.}}
	    
	    Let $cl$ be a large chore in $Q_z$\;
	    
	    Update $\mathcal{Q}=\swap{\mathcal{Q}}{cl}{S}$\;\label{step:swap1}
		{\small \emph{// In two special cases, another swap is needed:}}\\    
	    \If{$|Q_k|>|P_k|$} { 
		\label{step:inner-condition-2}	    
	    	\If {$Q_k$ contains exactly 2 large and 1 small chore, and $P_k$ contains exactly 2 large chores} {
	    		Move one small chore from $Q_k$ to $Q_z$.
	    	}
	    	\ElseIf{$Q_k$ contains exactly 2 large and 2 small chores, and $P_k$ contains exactly 3 large chores} {
			    Let $S'$ be a set of two small chores in $Q_k$. 
			    
			    Let $z' := $ max index of a bundle in $\mathcal{Q}$ that contains a large chore.

			    Let $cl'$ be a large chore in $Q_{z'}$\;
			    
			    Update $\mathcal{Q}=\swap{\mathcal{Q}}{cl'}{S'}$\;				
	    	}
	    	
	    }
    } 
            
    // {\small \emph{Here, $|Q_k|\leq |P_k|$.}}
    
    \For{j=1 \emph{\KwTo} $|P_k|$}  {\label{step:inner-loop-2}     
        \If{$v(Q_k[j])<v(P_k[j])$} { 

	    	Let $z := $ max index of a bundle in $\mathcal{Q}$ that contains a chore with the same cost as  $P_k[j]$, and let $cl$ be such a chore in $Q_z$\;
                {\small \emph{// either $P_k[j]$ is large and $Q_k[j]$ is small/empty, or $P_k[j]$ is small and $Q_k[j]$ is empty (it is possible only if $|Q_k|<|P_k|$)}}\;
                
            Update $\mathcal{Q}=\swap{\mathcal{Q}}{cl}{Q_k[j]}$\;
		    \label{step:swap2}
        } 
    } 
         
    } 
    \caption{
    \label{alg:simulate-bivalue}
    Transforming an MMS partition to an output of FFD}
\end{algorithm}

Let $\mathcal{P}$ be the output of FFD$(v,\tau)$. We prove that, by performing some swapping operations on $\mathcal{Q}$, we can make $\mathcal{Q}$ equal to $\mathcal{P}$. The swapping process is described as Algorithm \ref{alg:simulate-bivalue}. Below, we provide a detailed explanation of the process.

First note that, in the FFD outcome, the bundles are naturally ordered by descending order of the number of large chores, so that the first bundles contain $\lceil \tau / l \rceil$ large chores each, then there is possibly a single bundle that contains fewer large chores, and the remaining bundles contain only small chores.
In Line \ref{step:order}, we order the given MMS partition in the same way, by descending order of the large chores, and then by descending number of the small chores.

In the following lines, we process the bundles in $\mathcal{Q}$ and $\mathcal{P}$ in this order. For every $k$ in $1,\ldots, n$, we modify $\mathcal{Q}$ by swapping some chores, so that $Q_k$ becomes equal to $P_k$. Therefore, after the main loop ends, bundles $Q_1,\ldots,Q_n$ are equal to bundles $P_1,\ldots,P_n$, which means that in $\mathcal{P}$, all items are packed into $n$ bins.

We maintain the following invariants at the start of each iteration $k$:
\begin{itemize}
\item 
All bundles in $\mathcal{Q}$ processed in previous iterations are equal to their counterparts in $\mathcal{P}$, that is $Q_i=P_i$ for all $i<k$.
This holds vacuously in iteration $k=1$, and holds in the following iterations due to the swapping process.
\item 
All bundles in $\mathcal{Q}$ not processed yet have a cost of at most $\tau$. 
Moreover, each of these bundles satisfies one of the following conditions:
(a) A cost of at most $\vmms$, or ---
(b) No large chores at all, or ---
(c) One large chore and some number $b_i$ of small chores, such that $b_i+2$ small chores can fit into a single FFD bin.

Note that, in iteration $k=1$, invariant (a) holds, as the costs of all bundles in $\mathcal{Q}$ are at most $\vmms$ by definition of MMS.

Our main challenge will be to prove that the invariant is maintained in the following iterations.
\end{itemize}

At iteration $k$, transforming $Q_k$ to $P_k$ is done in 2 phases.

The first phase is the block starting at Line \ref{step:inner-loop-1}. It handles the quantity of chores in $Q_k$. If $Q_k$ contains more chores than $P_k$, then we reduce the number of chores in $Q_k$ by swapping two small chores from $Q_k$ with one large chore from a later bundle in $\mathcal{Q}$. To prove that this swap is possible, we prove below that $Q_k$ indeed contains two small chores, and that there is a large chore in a later bundle.

\begin{lemmarep}
\label{lem:twosmallchores}
In Algorithm~\ref{alg:simulate-bivalue} iteration $k$,
if $|Q_k| > |P_k|$, then 

(a) $P_k$ contains at least one large chore more than $Q_k$;

(b) $Q_k$ contains at least two small chores more than $P_k$;

(c) Some bundle $Q_z$ with $z>k$ contains a large chore (hence, $k<n$).

(d) $Q_k$ contains at least one large chore.
\end{lemmarep}
\begin{proof}
(a) 
By the loop invariant, $P_i = Q_i$ for all $i<k$, and $v(Q_k)\leq \tau$. Therefore, by FFD properties, $P_k \lex{\geq} Q_k$. As $Q_k$ contains more chores than $P_k$, $P_k$ must contain more large chores than $Q_k$.

(b)
As $P_k$ contains more large chores and $Q_k$ contains more chores overall, $Q_k$ must contain at least two small chores more than $P_k$.

(c) 
As $P_k$ contains some large chore that is not in $Q_k$, that large chore must be allocated elsewhere in $\mathcal{Q}$.
As $P_i = Q_i$ for all $i<k$, this large chore must be allocated in some $Q_z$ with $z>k$.

(d)
As the bundles in $\mathcal{Q}$ have been ordered by the number of large chores, $Q_k$ must contain at least as many large chores as $Q_z$.
\end{proof}

Note that, during the swap in Line \ref{step:swap1}, the cost of $Q_z$ might increase. We will prove later that, at the end of the iteration, it remains at most $\tau$, and also satisfies Invariant 3. Before that, we note two things:
\begin{itemize}
\item The algorithm only swaps a large chore from the \emph{last} bundle $Q_z$ containing a large chore.
Therefore, the order of bundles in $Q_k,\ldots,Q_n$ is maintained: we can still assume that the following bundles are ordered by descending order of the number of large chores.
\item 
The cost of the bundle $Q_z$ might grow above $\vmms$, but the costs of other bundles remain at most $\vmms$.
By Lemma \ref{lem:twosmallchores}(c), the last bundle $Q_z$ will never satisfy the condition $|Q_z|>|P_z|$, and will never arrive at Line \ref{step:swap1}.
Therefore, all bundles $Q_k$  that arrive at Line \ref{step:swap1} satisfy $v(Q_k)\leq \vmms$.
\end{itemize}

In most cases, $|Q_k|=|P_k|$  after a single swap, but in two special cases, another swap is needed. 
To analyze these cases, we divide to cases based on the numbers of large and small chores in $Q_k$ and $P_k$.
Due to Lemma \ref{lem:tau=H+s}, the total number of chores in $Q_k$ is at most $6$. 
Due to Lemma \ref{lem:twosmallchores}, 
if Line \ref{step:swap1} runs, 
then there are only $35$ cases for the numbers of chores; they are all listed in Figure \ref{fig:PkQk}.  
columns A--D.
For example, the first case (in row 3)
is the case in which $Q_k$ contains one large chore and two small chores, which is the minimum number required by Lemma \ref{lem:twosmallchores}.
In this case, by Lemma \ref{lem:twosmallchores}, $P_k$ must contain two large chores and no small chores.

\begin{figure*}
\begin{center}
\makebox[0pt]{
\includegraphics[width=0.9\textwidth]{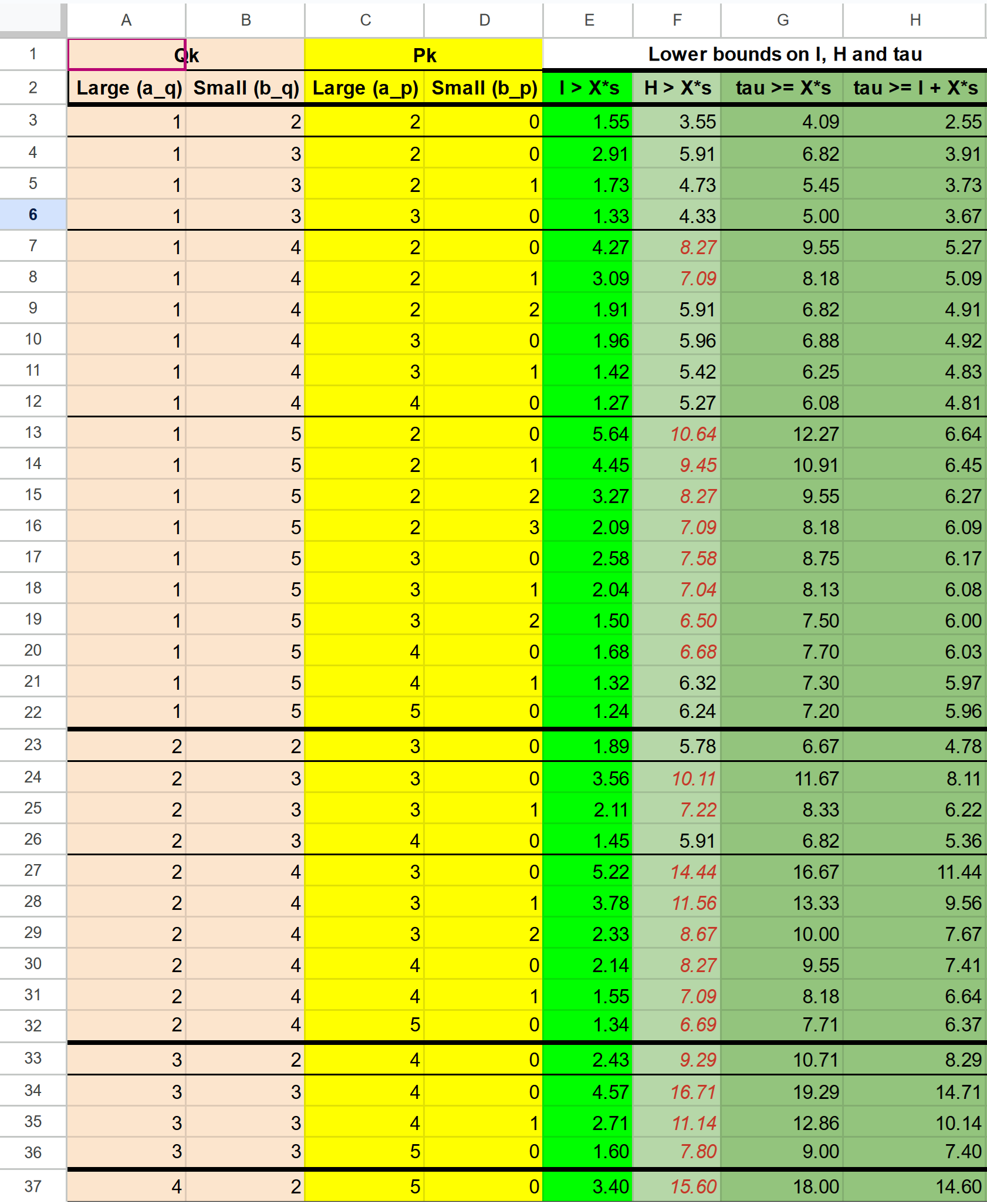}
}
\end{center}
\caption{
\label{fig:PkQk}
Columns A--D describe all possible cases for numbers of large and small chores in $P_k$ and $Q_k$, that satisfy Lemma \ref{lem:twosmallchores}.
\\
Column E describes a lower bound on $l/s$ derived from Lemma \ref{lem:PkQk}.
\\
Column F describes a lower bound on $\vmms/s$ derived from substituting column E in $\vmms \geq a_q l + b_q s$.
\\
Columns G and H 
describe lower bounds on $\tau$ derived from substituting column E in $\tau \geq (15/13)\cdot(a_q l + b_q s)$.
}
\end{figure*}

Given the numbers of large and small chores in $P_k$ and $Q_k$, we can derive a lower bound on the ratio $l/s$; that is, we derive an inequality of the kind $l > X\cdot s$, for some positive number $X$. 
The bound is proved below, and shown in Figure \ref{fig:PkQk} column E.

\begin{lemmarep}
\label{lem:PkQk}
Suppose $\tau = \frac{15}{13}\vmms$.
For any $a_q < a_p$ and $b_q > b_p$,
if $Q_k$ contains $a_q$ large and $b_q$ small chores,
and $P_k$ contains $a_p$ large and $b_p$ small chores, then
\begin{align*}
 &&
    l > (15 b_q - 13 b_p - 13)s / (13 a_p - 15 a_q).
\end{align*}
\end{lemmarep}
\begin{proof}
The condition on $Q_k$ implies
\begin{align*}
    a_q l + b_q s \leq \vmms = (13/15)\tau.
\end{align*}
On the other hand, $Q_k$ contains small chores that are unallocated in $P_k$, so by FFD properties, $P_k$ has no room for an additional small chore. Hence:
\begin{align*}
    a_p l + (b_p+1) s > \tau.
\end{align*}

Combining these inequalities gives:
\begin{align*}
    a_p l + (b_p+1) s &> \tau \geq (15/13)(a_q l + b_q s)
    \\
    (13 a_p - 15 a_q)l  &> (15 b_q - 13 b_p - 13)s
    \\
    l &> (15 b_q - 13 b_p - 13)s / (13 a_p - 15 a_q).
\end{align*}
This completes the proof.
\end{proof}

By substituting the lower bound on $l$ in the inequality 
$\vmms \geq a_q l + b_q s$, we can get a lower bound on the threshold $\vmms$ in terms $s$. That is, 
we can get an inequality of the form  $\vmms > X\cdot s$, for some positive number $X$. These bounds are shown in Figure \ref{fig:PkQk} column F. 
If the lower bound is at least $6.5s$, then by Corollary \ref{cor:6.5s} FFD packs all items into $n$ bins; therefore, we can ignore all the rows in which column F is at least $6.5$ (the relevant values in column F are italicized).

Similarly, by substituting the lower bound on $l$ in the inequality $\tau \geq (15/13)\cdot (a_q l + b_q s)$, we can get a lower bound on the threshold $\tau$ in terms of $l$ and $s$. In particular, we can get lower bounds of the forms: $\tau > X\cdot s$ or $\tau > l + X\cdot s$, for some positive numbers $X$. These upper bounds are shown in Figure \ref{fig:PkQk}, columns G and H.

Now we can see that, in almost all cases in which $\vmms < 6.5s$, the difference $|Q_k|-|P_k|=1$. There are  only two special cases, and they are handled by the inner condition starting at Line \ref{step:inner-condition-2}:
\begin{itemize}
\item $Q_k$ initially contained 1 large and 3 small chores, and $P_k$ contained 2 large and no small chores (row 4).
After the swap in Line \ref{step:swap1}, $Q_k$ contains 2 large and 1 small chore; we move the small chore to $Q_z$ and get $Q_k=P_k$.
\item 
$Q_k$ initially contained 1 large and 4 small chores, and $P_k$ contained 3 large and no small chores (row 10).
After the swap in Line \ref{step:swap1}, $Q_k$ contains 2 large and 2 small chores.
As it still contains fewer large chores than $P_k$, there must be some later bundle $Q_{z'}$ for $z'>k$ that contains a large chore; we do another swap of two small chores from $Q_k$ with one large chore from $Q_{z'}$, and get $Q_k=P_k$.
\end{itemize}

If one of the two special cases happens, $P_k = Q_k$ holds, and we can move to the next iteration.
Otherwise, $Q_k$ received at most one new large chore, and $|Q_k|\leq |P_k|$;
therefore,  $P_k \lex{\geq} Q_k$ holds.

Now we start the second phase of swaps, which is the loop starting at Line \ref{step:inner-loop-2}.
For convenience, we  add to $Q_k$ some dummy chores with cost $0$, so that $|Q_k| = |P_k|$. 
We process the chores in $Q_k$ and $P_k$ in order, from large to small. 
As  $P_k \lex{\geq} Q_k$, we always have $v(P_k[j])\geq v(Q_k[j])$.
If $v(P_k[j]) = v(Q_k[j])$ then there is nothing to do.
Otherwise, $v(P_k[j]) > v(Q_k[j])$,
which means that either (a) $P_k[j]$ is large and $Q_k[j]$ is small or dummy, or (b) $P_k[j]$ is small and $Q_k[j]$ is dummy.
In both cases, the chore $P_k[j]$ is missing from $Q_k[j]$. 
As for $i<k$ the bundles $Q_i$ and $P_i$ are identical, the missing chores must be found in some later bundle, say $Q_z$. 
We swap the smaller chore $Q_k[j]$ with the larger chore from $Q_z$.
The cost of $Q_z$ decreases, so the invariant is not affected.

Once the second loop ends $Q_k = P_k$,
so invariant 1 holds at the start of the next iteration. 

It remains to prove that invariants 2 and 3  hold too. For this we have to prove that, for every bundle $Q_z$ involved in a swap, its cost after the end of the iteration is at most $\tau$, and in addition, it satisfies one of the conditions (a) (b) or (c) in invariant 3.

Now we are ready to prove the loop invariants 2 and 3. We assume that the invariants hold at the iteration start, and prove that they hold at the iteration end.

\begin{lemmarep}    
If invariants 2 and 3 hold at the start of iteration $k$, then they also hold at the end of iteration $k$, so that the following holds for all $i>k$:
\begin{itemize}
\item $v(Q_i)\leq \tau$. 
\item Either (a) $v(Q_i)\leq \vmms$, or (b) $Q_i$ contains no large chores, or (c) $Q_i$ contains one large chore and some $b_i$ small chores and $(b_i+2)s\leq \tau$.
\end{itemize}
\end{lemmarep}  
\begin{proof}
If the conditions of Lemma \ref{lem:twosmallchores} are not satisfied in iteration $k$, then Line \ref{step:swap1} does not run, and thus the cost of $Q_i$ remains the same or decreases. Also, the number of large chores in $Q_i$ remains the same or decreases. Therefore the invariants still hold.

Otherwise, the numbers of chores in $Q_k$ and $P_k$ must be one of the cases in Figure \ref{fig:PkQk}, that is, one of the rows 3--37 where the value in column F is less than $6.5$. 
We have to prove that, in all these cases, the modified bundle $Q_z$ still satisfies the invariants. We consider three cases based on the contents of $Q_k$.

\paragraph{Case 1.}
$Q_k$ contains exactly one large chore, and some $b_q$ small chores (rows 3--22 in the figure).
By the ordering on $\mathcal{Q}$, $Q_z$ contains exactly one large chore, and at most $b_q$ small chores.

After the swap in Line \ref{step:swap1}, $Q_z$ contains no large chores and at most $b_q+2$ small chores, so $v(Q_z)\leq (b_q+2)s$.
By looking at Column G it can be verified that, in all rows 3--22, $\tau \geq (b_q+2)\cdot s$. Therefore, $v(Q_z)\leq \tau$ in all these cases, and $Q_z$ satisfies invariant (b).

We now handle the two special cases, where another swap happens.
\begin{itemize}
\item In the special case of row 4 (where $Q_k$ initially contains 1 large and 3 small chores), $\tau > 6s$. Therefore, $v(Q_z) < \tau$ even after one more small chore is moved.
\item In the special case of row 10 (where $Q_k$ initially contains 1 large and 4 small chores), 
$Q_z$ after the first swap contains no large chores, so the second swap will occur with another bundle $Q_{z'}$.
Similarly to $Q_z$, this bundle too contains 1 large and at most 4 small chores, so after the swap it contains at most 6 small chores, so $v(Q_{z'}) \leq 6s < \tau$.
\end{itemize}

\paragraph{Case 2.}
$Q_k$ contains 2 large and 2 small chores,
and $P_k$ contains 3 large chores (row 23 in the figure). 
In this case $\tau > 6.67s$ (see column G).
We now split to sub-cases based on the number of large chores in $Q_z$.

\paragraph{Subcase 2.1(a).}
$Q_z$ contains 1 large chore and satisfies invariant (a), that is, $v(Q_z)\leq \vmms$. 
This means that $Q_z$ contains some $b_z$ small chores such that $l  + b_z s \leq \vmms$.
After the swap, $Q_z$ contains no large chores and $(b_z+2)$ small chores.
If $b_z\leq 4$, then $v(Q_z)\leq 6s < \tau$,
so $Q_z$ satisfies invariant (b).
Otherwise, $l+5s\leq \vmms$.
As $P_k$ contains 3 large chores, 
the situation is as in row 17, where $7.58s <\vmms$, so the case is handled by Corollary \ref{cor:6.5s}.

\paragraph{Subcase 2.1(c).}
$Q_z$ contains 1 large chore and satisfies invariant (c), that is, it contains some $b_z$ small chores such that $(b_z+2)s\leq \tau$. 
After the swap, $Q_z$ contains no large chores and $(b_z+2)$ small chores, 
so $v(Q_z)\leq \tau$, and $Q_z$ satisfies invariant (b).

\paragraph{Subcase 2.2.}
$Q_z$ contains 2 large chores. By the ordering on $\mathcal{Q}$, it contains at most 2 small chores. 
After the swap in Line \ref{step:swap1}, $Q_z$ contains $1$ large chore and at most $4$ small chores.
In column H it can be seen that $\tau \geq l + 4\cdot s$, so  $v(Q_z)\leq \tau$.
So  $v(Q_z)\leq 6s < \tau$, and $Q_z$ satisfies invariant (b).


\paragraph{Case 3.}
$Q_k$ contains 2 large and 3 small chores,
and $P_k$ contains 4 large chores (row 26 in the figure). 
In this case $\tau > 6.82s$ (see column G).

Again we split to sub-cases based on the number of large chores in $Q_z$.

\paragraph{Subcase 3.1(a).}
$Q_z$ contains 1 large chore and satisfies invariant (a), that is, $v(Q_z)\leq \vmms$. 
This means that $Q_z$ contains some $b_z$ small chores such that $l  + b_z s \leq \vmms$.
After the swap, $Q_z$ contains no large chores and $(b_z+2)$ small chores.
If $b_z\leq 4$, then $v(Q_z)\leq 6s < \tau$,
so $Q_z$ satisfies invariant (b).
Otherwise, $l+5s\leq \vmms$.
As $P_k$ contains 4 large chores,
the situation is as in row 20, where $6.68s <\vmms$, so the case is handled by Corollary \ref{cor:6.5s}.

\paragraph{Subcase 3.1(c).}
This case is handled exactly like Subcase 2.1(c) above.

\paragraph{Subcase 3.2.}
$Q_z$ contains 2 large chores. By the ordering on $\mathcal{Q}$, it contains at most 3 small chores. 
After the swap in Line \ref{step:swap1}, $Q_z$ contains $1$ large chore and at most $5$ small chores.
At the same time, $Q_k$ contains 3 large chores and one 1 small chores.
As a result, $|Q_k|=|P_k|$ but $Q_k\neq P_k$, so the algorithm moves to the second loop. 
In Line \ref{step:swap2}, 
the algorithm swaps one small chore from $Q_k$ with one large chore from $Q_z$. 
At the end of the iteration, $Q_z$ contains no large chores and at most $6$ small chores.
So  $v(Q_z)\leq 6s < \tau$, and $Q_z$ satisfies invariant (b).
\end{proof}

We have proved that the invariants hold throughout Algorithm~\ref{alg:simulate-bivalue}. Therefore, they also hold when the algorithm ends. Therefore, when the algorithm ends, $Q_k=P_k$ for all $k\in\{1,\ldots,n\}$. This shows that FFD has packed all chores in $\mathcal{Q}$ into $n$ bundles. Combined with Example \ref{exm:15/13}, we get the tight bound declared in Theorem \ref{thm:bivalued}.

\section{Conclusion}
\citet{HuangL21} first explored the connection between the bin packing problem and the fair allocation of chores, an approach that was further utilized in~\citep{huang2023reduction}. Building on this foundation, our paper delves deeper into this connection. We demonstrated that the HFFD algorithm could improve the state-of-the-art of maximin share (MMS) in factored instances, personalized bivalued instances, and 1-out-of-$d$ MMS allocations. The central question we explore is: Can we discover additional connections between bin packing and the fair allocation of chores?

An intriguing open question in this area is whether the HFFD (and FFD) algorithm is monotonic in a certain range. In this paper, we demonstrate the monotonicity of the algorithm for two special classes of chores costs. This question is closely tied to the possibility of designing an efficient algorithm for MMS allocations that achieves the exact approximation ratio of the bin packing problem. The monotonicity in the general case remains unclear. Another open problem is whether MMS allocations always exist for personalized bivalued instances. To the best of our knowledge, there are no known counterexamples demonstrating the non-existence of MMS for these instances. 

\section*{Acknowledgments}
Jugal Garg was supported by NSF Grants CCF-1942321 and CCF-2334461. Xin Huang was supported by JST ERATO Grant Number JPMJER2301, Japan. Erel Segal-Halevi was funded by Israel Science Foundation grant no. 712/20.

\bibliographystyle{plainnat}
\bibliography{aaai25}

\appendix
\end{document}